\newcommand{\ifthen}[2]{\ifthenelse{#1}{#2}{}}
\newcommand{\ifnot}[2]{\ifthenelse{#1}{}{#2}}
\crefname{theorem}{Theorem}{theorems}
\crefname{definition}{Definition}{definitions}
\crefname{proposition}{Proposition}{propositions}
\crefname{lemma}{Lemma}{lemmas}
\crefname{corollary}{Corollary}{corollaries}
\crefname{claim}{Claim}{claims}
\crefname{observation}{Observation}{observations}
\crefname{fact}{Fact}{facts}
\crefname{dfn}{Definition}{definitions}
\crefname{obs}{Observation}{observations}
\crefname{pb}{Problem}{problems}
\crefname{algo}{Algorithm}{algorithms}
\newcounter{maincounter} 
  \newtheorem{theorem}[maincounter]{Theorem}
\newtheorem{proposition}[maincounter]{Proposition}
\newtheorem{lemma}[maincounter]{Lemma} 
\newtheorem{corollary}[maincounter]{Corollary}
\newtheorem{claim}[maincounter]{Claim}
\newtheorem{fact}[maincounter]{Fact}
\newcommand*\patchAmsMathEnvironmentForLineno[1]{%
  \expandafter\let\csname old#1\expandafter\endcsname\csname #1\endcsname
  \expandafter\let\csname oldend#1\expandafter\endcsname\csname end#1\endcsname
  \renewenvironment{#1}%
     {\linenomath\csname old#1\endcsname}%
     {\csname oldend#1\endcsname\endlinenomath}}% 
\newcommand*\patchBothAmsMathEnvironmentsForLineno[1]{%
  \patchAmsMathEnvironmentForLineno{#1}%
  \patchAmsMathEnvironmentForLineno{#1*}}%
\newdimen\commentwd
\let\oldtcp\tcp
\def\alignedtcp*[#1]#2{% only support one style for simplicity
\setbox0\hbox{#2}%
\ifdim\wd\z@>\commentwd\global\commentwd\wd\z@\fi
\oldtcp*[r]{\leavevmode\hbox to \commentwd{\box0\hfill}}}
\let\oldalgorithm\algorithm
\def\algorithm{\oldalgorithm
\global\commentwd\z@
\expandafter\ifx\csname commentwd@\romannumeral\csname c@\algocf@float\endcsname\endcsname\relax\else
\global\commentwd\csname commentwd@\romannumeral\csname c@\algocf@float\endcsname\endcsname
\fi
}
\let\oldendalgorithm\endalgorithm
\def\endalgorithm{\oldendalgorithm
\immediate\write\@auxout{\gdef\expandafter\string\csname commentwd@\romannumeral\csname c@\algocf@float\endcsname\endcsname{%
\the\commentwd}}}
\DeclareMathOperator{\dist}{dist} %\textnormal{dist} 
\renewcommand{\O}{\ensuremath{\mathcal{O}}}
\newcommand{\noise}{\ensuremath{\operatorname{N}}}
\newcommand{\E}{\ensuremath{\mathcal{E}}}
\renewcommand{\P}{\ensuremath{\mathbb{P}}}
\newcommand{\RandCDeltaColoring}{{\sc Rand4DeltaColoring}}
\newcommand{\ColorReduction}{{\sc ColorReduction}}
\newcommand{\etc}{\ensuremath{\E^t_\text{confl}}}
\newcommand{\ptc}{\ensuremath{\P^t_\text{confl}}}
\newcommand{\pc}{\ensuremath{\P_\text{confl}}}
\newcommand{\ets}{\ensuremath{\E^t_\text{succ}}}
\newcommand{\pts}{\ensuremath{\P^t_\text{succ}}}
\newcommand{\etf}{\ensuremath{\E^t_\text{fail}}}
\newcommand{\ptf}{\ensuremath{\P^t_\text{fail}}}
\newcommand{\blockline}{\noindent\hspace{-0.025\textwidth}%
    \rule[8.5pt]{0.989\linewidth}{0.8pt} \\[-0.80\baselineskip] }
\newcounter{algo} 
\newcounter{subalgo}[algo] 
\renewcommand{\thesubalgo}{\thealgo.\arabic{subalgo}}
\begin{document}

\title{Simple~Distributed~$\Delta+1$~Coloring~in~the~SINR~Model}

\author{
  Fabian Fuchs, 
 Roman Prutkin \\
  {\normalsize Karlsruhe Institute for Technology } \\
  {\normalsize   Karlsruhe, Germany } \\
  {\normalsize   \{fabian.fuchs, roman.prutkin\}@kit.edu } 
}

\maketitle

%\date{\today}

%\graphicspath{{../}{./}}

\addtolength{\parskip}{-1pt}
%\addtolength{\abovedisplayskip}{-1pt}
%\addtolength{\belowdisplayskip}{-1pt}

\begin{abstract}
  In wireless ad hoc or sensor networks, distributed node coloring is
  a fundamental problem closely related to establishing efficient
  communication through TDMA schedules. For networks with maximum
  degree~$\Delta$, a~$\Delta+1$ coloring is the ultimate goal in the
  distributed setting as this is always possible. In this work we
  propose~$\Delta+1$ coloring algorithms for the synchronous and
  asynchronous setting. All algorithms have a runtime of
 ~$\O(\Delta \log n)$ time slots. This improves on the previous
  algorithms for the SINR model either in terms of the number of
  required colors or the runtime and matches the runtime of local
  broadcasting in the SINR model (which can be seen as a lower bound).
\end{abstract}

\section{Introduction}
\label{sec:introduction}

One of the most fundamental problems in wireless ad hoc or sensor
networks is efficient communication. Indeed, most algorithms concerned
with the physical or \emph{Signal-to-Interference-and-Noise-Ratio}
(SINR) model consider algorithms to establish initial communication
right after the network begins to operate. However, those initial
methods of communication are not very efficient, as there are either
frequent collisions and reception failures due to interference, or
time is wasted in order to provably avoid such collisions and
failures. In the case of local
broadcasting\cite{gmw-lbpim-08,hm-ttblb-12,yhwl-aodaa-12,fw-lbatp-14},
a multiplicative~$\O(\Delta \log n)$ factor is required to execute
message-passing algorithms in the SINR model, where~$\Delta$ is the
maximum degree in the network, cf. Section~\ref{sec:prelim-coloring}
for a definition. Thus, wireless networks often use a more refined
transmission schedule as part of the Medium Access Control (MAC)
layer. One of the most popular solutions to the medium access problem
are Time-Division-Multiple-Access (TDMA) schedules, which provide
efficient communication by assigning nodes to time slots. The main
problem in establishing a TDMA\footnote{This also holds for related
  techniques such as Frequency-Division-Multiple-Access (FDMA).}
schedule can be reduced to a distributed node coloring. Given a node
coloring, we can establish a transmission schedule by simply
associating each color with one time slot. The node coloring
considered in this work ensure that two nodes capable of communicating
directly with each other do not select the same color.  Note that a
TDMA schedule based on such a coloring is not yet feasible in the SINR
model. However, a TDMA schedule that is feasible in the SINR model can
be computed based on our coloring, for example as shown in
\cite{dt-dncsi-10,fw-olbsc-13}.

The problem of distributed node coloring dates back to the early days
of distributed computing in the mid-1980s. In contrast to centralized
node coloring, a~$\Delta + 1$ coloring is considered to be the
ultimate goal in distributed node coloring as it is already
NP-complete to compute the chromatic number (i.e., the minimum number
of colors required to color the graph) in the centralized
setting\cite{gj-np-book-79}.  There is a rich line of research in this
area, however, most of the work has been done for message-passing
models like the~$\mathcal{LOCAL}$ model. Such models are designed for
wired networks and do not fit the specifics of wireless networks.

In the SINR model, wireless communication is modelled based on the
signal transmission and a geometric decay of the signal strength. In
contrast to graph-based models such as the protocol model, which model
interference between nodes as a purely local property, the SINR model
accounts for both local and global interference. Furthermore, it has
been shown that the protocol model is quite limited in modeling
reality\cite{mww-pdbgbm-06}. Often, the SINR model is denoted as the
physical model due to its common use in electrical engineering. Thus,
the algorithms designed for the SINR model are far more realistic and
fit the specifics of wireless networks better.

In this work, we use two simple and very well-known algorithms
(covered for example in \cite{be-dcg-13}) designed for message-passing
models, and show that we can \emph{efficiently} execute the algorithms
in the SINR model. However, this cannot be achieved by a simple
simulation of each round of the message passing algorithm by one
execution of local broadcasting as this results in a runtime of
$\O(\Delta \log^2 n)$ time slots. Instead, we modify both the
communication rounds in the SINR model and the algorithms to perfectly
fit together. The synergy effect of our careful adjustments is that
the coloring algorithm runs in~$\O(\Delta \log n)$ time slots, which
is asymptotically exactly the runtime of one local
broadcast\cite{gmw-lbpim-08}. This matches the runtime of current
$\O(\Delta)$ coloring algorithms\cite{dt-dncsi-10}, and improves on
current~$\Delta+1$ coloring algorithms for the SINR model which
require~$\O(\Delta \log n + \log^2 n)$ or~$\O(\Delta \log^2 n)$ time
slots\cite{ywhl-ddcpm-14-journal}.

Note that our runtime cannot be achieved using algorithms in the
$\mathcal{LOCAL}$ model, as state-of-the-art algorithms for
distributed node coloring in the~$\mathcal{LOCAL}$ model run in
$\O(\log \Delta + 2^{\O(\sqrt{\log \log n})})$
rounds\cite{beps-tldsb-12} (in general graphs) or~$\O(\log^* n)$
rounds\cite{sw-alsdmis-08} (in growth bounded graphs) of local
broadcasting. Indeed, a lower bound of~$\frac{1}{2} \log^*n$ for
distributed node coloring in the~$\mathcal{LOCAL}$ model due to Linial
\cite{l-dgagl-87} exists. Thus, in contrast to the algorithm proposed
in this work, the number of time slots required to execute distributed
node coloring algorithms designed for the~$\mathcal{LOCAL}$ model in
the SINR model must require~$\Omega(\log^*n)$ local broadcasting~executions.

The communication between nodes in our algorithm is based on the local
broadcasting algorithm proposed by Goussevskaia
\emph{et. al.}\cite{gmw-lbpim-08}.  Thus, we require the nodes to know
an upper bound on the maximum number of nodes in a node's surroundings
(which we call proximity area, cf. Section~\ref{sec:prelim-coloring}),
an upper bound on the number of nodes in the network, as well as some
model-related hardware constants in order to enable initial
communication.  Such requirements are common in the SINR model,
however, we discuss the case that the number of nodes in the proximity
area is not known in \cref{sec:without-knowledge-delta}. We
discuss related work and our contributions in the next section.

\subsection{Related Work and Contributions}
\label{sec:related-work}

Due to the rich amount of work on distributed node coloring in the
message-passing model, we shall only highlight the most efficient
algorithms for~$\Delta+1$ colorings here. For a more thorough overview
we refer to a recent monograph by Barenboim and Elkin
\cite{be-dcg-13}. 
The fastest deterministic algorithm for general graphs is due to
Awerbuch \emph{et. al.}~\cite{algp-ndldc-89} and Panconesi and
Srinivasan~\cite{ps-ocdnd-96} and runs in~$2^{\O(\sqrt{\log n})}$.
For moderate values of~$\Delta$, another deterministic algorithm with
runtime~$\O(\Delta) + \frac{1}{2} \log^*n$ is due to Barenboim, Elkin
and Kuhn~\cite{bek-ddclt-14}. For growth bounded graphs, which
generalize unit disk graphs, a deterministic distributed algorithm due
to Schneider and Wattenhofer~\cite{sw-alsdmis-08} computes a
valid~$\Delta+1$ coloring in~$\O(\log^* n)$ rounds.
Regarding randomized algorithms, the algorithm on which this work is
based was the most efficient~$\Delta+1$ coloring algorithm until
recently~\cite[Chapter~10]{be-dcg-13}. The algorithm can be seen as a
simple variant of Luby's maximal independent set
algorithm~\cite{l-rrpcp-88}.  In recent years the problem received
considerable attention which culminated in the currently best
randomized algorithm running
in~$\O(\log \Delta + 2^{\O(\sqrt{\log \log n})})$ due to Barenboim
\emph{et. al.}~\cite{beps-tldsb-12}.

In wireless networks, the SINR model received increasing attention
first in the electrical engineering community, and was picked up by
the algorithms community due to a seminal work by Gupta and
Kumar~\cite{gk-cwn-00}.  An overview of works regarding transmission
scheduling in the SINR model can be found in a survey by Goussevskaia,
Pignolet and Wattenhofer~\cite{Goussevskaia2010Efficiency}. Other
results cover Broadcasting \cite{dgkn-bahs-13,jkrs-drbwn-13}, Local
Broadcasting\cite{gmw-lbpim-08,hm-ttblb-12,yhwl-aodaa-12,fw-lbatp-14}
and backbone construction\cite{jk-dbsas-12}.  Regarding distributed
node coloring in the SINR model Derbel and Talbi \cite{dt-dncsi-10}
show that a distributed node coloring algorithm due to Moscibroda and
Wattenhofer \cite{mw-curn-08,mw-curn-05} can be adapted to the SINR
model. They provide an algorithm that computes an~$\O(\Delta)$
coloring in~$\O(\Delta \log n)$ time slots. The algorithm first
computes a set of leaders using a maximal independent set (MIS,
cf. \cref{sec:prelim-coloring}) algorithm, then leader nodes assign
colors to non-leaders, which again compete for their final color with
a restricted number of neighboring nodes that may have received the
same assignment.  Yu
\emph{et. al.}~\cite{ywhl-ddcpm-14-journal} propose two
$\Delta+1$ coloring algorithms that do not require the knowledge of
the maximum node degree~$\Delta$.  Their first algorithm runs in
$\O(\Delta \log n + \log^2 n)$ time slots and assumes that nodes are
able to increase their transmission power for the computation. This
prevents conflicts between non-leader nodes by allowing the set of
leaders to directly communicate to other leaders outside the
transmission region and thus coordinating the assignment process.
Their second algorithm does not require this assumption and runs in
$\O(\Delta \log^2 n)$ time slots.

\noindent
Our main contributions are 
\begin{itemize}
\item a very simple algorithm to compute a~$\Delta+1$ algorithm in
 ~$\O(\Delta \log n)$ time slots in the synchronous setting;
\item an abstract method that has the potential of improving the
  runtime of other randomized algorithms in the SINR model by a
 ~$\log n$ factor; and
\item an asynchronous color reduction scheme, which, combined with
  known coloring algorithms computes a~$\Delta+1$ coloring in
 ~$\O(\Delta \log n)$ time slots.
\end{itemize}

The coloring algorithms improve current algorithms in the same setting
(cf. Derbel and Talbi \cite{dt-dncsi-10}) regarding the number of
colors the declared goal of~$\Delta+1$, while the runtime is matched.
Other~$\Delta+1$ coloring algorithms in the SINR model require at
least~$\O(\Delta \log n + \log^2 n)$ time slots (under not comparable
assumptions). The method to improve the runtime by a~$\log n$ factor
carefully combines the uncertainty in randomized algorithms with the
uncertainty in the SINR model to handle them combined in the
analysis. For more details, we refer to the Analysis of
Algorithm~\ref{algo:rand4deltacolor} in \cref{sec:analysis}.

\textbf{Roadmap:} In the next section we state the model along with
required definitions. Communication results in the SINR model required
in the algorithm are stated in \cref{sec:sinr-model-related}.  In
\cref{sec:algorithm} the algorithms for the synchronous setting are
described and analyzed. The color reduction
scheme is generalized to the asynchronous setting in
\cref{sec:async-color-reduct}.

\section{Model and Preliminaries}
\label{sec:prelim-coloring}

Let us formulate some notation before defining the coloring
problem. The \emph{Signal-to-Interference-and-Noise-Ratio} (SINR)
model is used to model whether transmission in a wireless network can
be successfully decoded at the intended receivers or not. We say that
a transmission from a sender to a receiver is \emph{feasible} if it
can be decoded by the receiver.  In the SINR model it depends on the
ratio between the desired signal and the sum of interference from
other nodes plus the background noise whether a certain transmission
is successful. Let each node~$v$ in the network use the same
transmission power~$P$. Then a transmission from~$u$ to~$v$ is
feasible if and only if
\begin{align*}
  \frac{\frac{P}{\dist(u,v)^\alpha}}{\sum_{w \in I}
    \frac{P}{\dist(w,v)^\alpha} + \noise} \geq \beta,
\end{align*}
where~$\alpha, \beta$ are constants depending on the hardware,
$\noise$ reflects the environmental noise,~$\dist(u,v)$ the Euclidean
distance between two nodes~$u$ and~$v$, and~$I \subseteq V$ is the set
of nodes transmitting simultaneously to~$u$.  The \emph{broadcasting
  range}~$r_B$ of a node~$v$ defines the range around~$v$ up to which
$v$'s messages should be received. We denote the set of neighbors of
$v$ by~$N_v := \{w \in V\backslash \{v\} | \dist(v,w) \leq r_B\}$ and
$N_v^+ := N_v \cup \{v\}$. Based on the SINR constraint, the
\emph{transmission range} of~$r_T \leq ( \frac{P}{\beta \noise}
)^{1/\alpha}$ is an upper bound for the broadcasting range (with~$r_B
\leq (1-\epsilon) r_T$ for a constant~$\epsilon > 0$, otherwise no
spacial reuse is possible).  Let the \emph{broadcasting region}~$B_v$
be the disk with range~$r_B$ centered at~$v$.

The communication graph~$G = (V,E)$ is defined as follows. The set of
vertices~$V$ in the graph corresponds to the set of nodes in the
network, while there is an edge~$(u,v) \in E$ if and only if~$u$ and
$v$ are \emph{neighbors} (i.e., they are within each other's
broadcasting range). The \emph{maximum degree} in the network is
$\Delta := \max_{v \in V} |N_v|$.  Note that since~$r_B < r_T$, a
node~$v$ may successfully receive transmissions from nodes that are
not its neighbors in the communication graph, although successful
transmission from those node cannot be guaranteed. As the signal
strength decreases geometrically in the SINR model, we can assume that
messages from outside the broadcasting range are discarded by
considering the signal strength of a received message (wireless
communication hardware usually provides this indication as the
Received-Signal-Strength-Indication (RSSI)
value\cite{bardwell2002converting}). In a more practical setting, one
could also define the communication graph based on the possible
communication between two nodes. For this work we restrict
communication of the nodes to the broadcasting range by assuming nodes
outside the broadcasting range discard the message. To prove
successful communication within the broadcasting range, we need the
concept of a \emph{proximity range}~$r_A > 2 r_B$ around a node~$v$ as
introduced in~\cite{gmw-lbpim-08}. Let~$\Delta^v_A$ be the number of
nodes with distance less than~$r_A$ to~$v$, and
$\Delta^A := \max_{v \in V} \Delta^v_A$. It holds that
$\Delta^A \in \O(\Delta)$. As further technical details of the
proximity range are not required in our analysis, we refer to
\cite{gmw-lbpim-08} for the exact definitions.

In the \emph{synchronous setting}, we assume the nodes to start the
algorithm at the same time. We discuss this assumption briefly in
\cref{sec:discussion}. In the more realistic \emph{asynchronous
  setting} arbitrary wake-up of nodes is allowed and we do not require
synchronized time slots; decent clocks, however, are assumed. In our
analysis, we shall argue using time slots although the nodes to not
have common time slots. As, for example, in \cite{gmw-lbpim-08}
communication can be established in asynchronous networks similar to
synchronous networks using the aloha trick.

We call two nodes~$v,u \in V$ \emph{independent} if they are not
neighbors. A set~$S \subseteq V$ such that the nodes in~$S$ are
pairwise independent is called \emph{independent set}. Obviously,
$S \subseteq V$ is a \emph{maximal independent set} (MIS) if~$S$ is
independent and there is no~$v \in V\backslash S$ with~$S \cup \{v\}$
independent.  Denote the set of integers~$\{0,\dots,i\}$ by~$[i]$. Let
us now define the coloring problem.  Given a set of nodes~$V$ so that
each node~$v \in V$ has a color ~$c_v$, and let~$d$ be an
integer. Then~$V$ has a \emph{valid~$d+1$ coloring}, if for each
node~$v$ holds~$\forall v \in N_v: c_v \not = c_w$ and~$c_v \in [d]$.
Observe that in a valid coloring each color in the network forms a
independent set.

In the following we introduce the notation required for the analysis
of Algorithm~\ref{algo:randcolor}, which consists of several phases
and is described in the next section. We say that two nodes~$v,w$ have
a \emph{conflict} if~$c_v = c_w$ and denote the temporary color of~$v$
in phase~$t$ by~$c^t_v$. The set of nodes that are in a conflict
with~$v$ in phase~$t$ is~$X^t(v) := \{w \in N_v | c^t_v = c^t_w\}$. We
call~$X^t(v)$ the \emph{conflict set} of~$v$ in phase~$t$.  Let us now
define some events. The event that~$v$ is in a conflict in phase~$t$
is~$\etc(v) := \exists w \in X^t(v)$. Note that it does not matter
whether~$v$ knows of the conflict or not.  The event that a
transmission from~$v$ to all neighbors~$N_v$ of~$v$ in phase~$t$ is
\emph{successful} is~$\ets(v)$. A transmission from~$v$ to its
neighbors in phase~$t$ is not successful or \emph{fails} if at least
one neighbor was unable to receive the message. The corresponding
event is~$\E^t_\text{fail}(v)$. We replace~$\E$ by~$\P$ to denote the
probability of an event, e.g.~$\pts(v)$ instead of~$\ets(v)$.  Note
that although the events~$\ets(v)$, and~$\etf(v)$ may not be
independent of events happening at other nodes, our bounds on the
corresponding probabilities~$\pts(v)$ and~$\ptf(v)$ are independent
from the node~$v$ and possible events at other nodes.  Also, our
bounds~$\pts(v)$ and~$\ptf(v)$ on these events include the event
that~$v$ reaches some but not all of its neighbors,
as~$\ptf(v) \leq 1 - \pts(v) \leq 1/12$
and~$11/12 \leq \pts(v) \leq 1$ (see
Section~\ref{sec:sinr-model-related} for a proof of the bounds).

The nodes use two different transmission probabilities in order to
adapt to the requirements of the corresponding
algorithms. Probability~$p_1 := \frac{1}{2\Delta^A}$ is used in
Algorithm~\ref{algo:rand4deltacolor}, while
Algorithm~\ref{algo:colorreduction} uses~$p_2 := \frac{1}{180}$.
\cref{algo:async-color-reduction} uses both transmission
probabilities.  Let~$c$ be an arbitrary constant with~$c > 1$.
Throughout the paper, we use the following constants:
$\kappa_\ell := c \lambda \ln n / p_\ell$ for~$\ell=1,2$,
$\kappa_0 := \lambda \ln 12 / p_1$,
and~$\lambda := \left( P^{A_v}_\text{none} \cdot P^v_\text{SINR}
\right)^{-1}$,
where~$P^{A_v}_\text{none} = (1/4)^{\O(1)}$
and~$P^v_\text{SINR} = 1/2$ to simplify arguments in the SINR model
(for more details and a derivation of the probabilities, we refer to
Appendix~\ref{sec:succ-transm-with}).  Note that
$\kappa_0 \in \O(\Delta)$, $\kappa_1 \in \O(\Delta \log n)$,
and~$\kappa_2 \in \O(\log n)$.  We shall use the following
mathematical fact in our analysis. It can be found, for example, in
the mathematical background section of \cite{mr-ra-10}.
\begin{fact}
  \label{fact:math-fact}
  For all~$t,n \in \mathbb{R}$, such that~$n \geq 1$ and~$|t| \leq n$
  it holds that~$\left(1+\frac{t}{n}\right)^n \leq e^{t}$.
\end{fact}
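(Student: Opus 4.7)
The plan is to deduce the claim from the elementary one-variable inequality $1 + x \leq e^x$, valid for all real $x$. I would first recall (or briefly justify) this basic fact by, for instance, convexity of $\exp$ around $0$, or equivalently by noting that the function $f(x) = e^x - 1 - x$ has derivative $f'(x) = e^x - 1$, which is negative for $x < 0$ and positive for $x > 0$, so $f$ attains its global minimum $0$ at $x = 0$.

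Next I would specialize to $x := t/n$. Here the hypothesis $|t| \leq n$ together with $n \geq 1$ enters: it guarantees $|t/n| \leq 1$, so that $1 + t/n \in [0, 2]$. This nonnegativity of the base is the only nontrivial condition one has to check, and it is exactly what makes the quantity $(1 + t/n)^n$ well-defined for real exponent $n$, and what makes the map $y \mapsto y^n$ monotone nondecreasing on $[0, \infty)$.

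Having $0 \leq 1 + t/n \leq e^{t/n}$, I would then raise both sides to the $n$-th power and invoke this monotonicity to conclude
\[
  \left(1 + \tfrac{t}{n}\right)^n \;\leq\; \left(e^{t/n}\right)^n \;=\; e^{t},
\]
which is the stated inequality.

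There is no real obstacle here: the only point that requires mild care is ensuring that the base $1 + t/n$ is nonnegative, which is precisely why the hypothesis is phrased as $|t| \leq n$ rather than just $t \leq n$. Everything else is a one-line consequence of the standard exponential bound.
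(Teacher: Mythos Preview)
Your proof is correct. The paper itself does not prove this fact; it merely states it and points to the mathematical background section of a standard reference (Mitzenmacher--Upfal, \emph{Probability and Computing}). Your argument via $1+x \le e^x$ and monotonicity of $y \mapsto y^n$ on $[0,\infty)$ is exactly the standard justification and fills in what the paper leaves as a citation.
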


\section{Adaptable Transmission Probabilities in the SINR Model}
\label{sec:sinr-model-related}

We shall show in this section that local broadcasting with constant
success probability in time reversely proportional to the transmission
probability can be achieved. This is extends known results regarding
local broadcasting, which guarantee local broadcasting with
\emph{high} probability for a fixed number of time slots.  As in these
results we require the nodes transmission probability to conform with
the requirement that the sum of transmission probabilities from each
broadcasting region is at most 1, which is stated in
Lemma~\ref{lem:sum-tx-prob}. This local property is used in the
following theorem to ensure that both the interference by few local
transmissions as well as the summed interference by all globally
simultaneous transmissions can be handled.
\begin{restatable}{lemma}{lemmasumtxbounded}
  \label{lem:sum-tx-prob}
  Let all nodes in the network execute Algorithm~\ref{algo:randcolor}
  (or \cref{algo:async-color-reduction}), and let~$v$ be an arbitrary
  node.  Then the sum of transmission probabilities from within~$v$'s
  broadcasting range is at most 1.
\end{restatable}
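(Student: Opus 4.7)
The plan is to split the nodes in $v$'s broadcasting region $B_v$ according to which of the two transmission probabilities, $p_1 = \frac{1}{2\Delta^A}$ or $p_2 = \frac{1}{180}$, they currently use, bound the contribution of each group separately, and show the total is at most $1$. For the synchronous algorithms this is a clean dichotomy; for \cref{algo:async-color-reduction} I would additionally observe that every node uses exactly one of the two probabilities at any given time slot, so summing the contributions group-by-group still yields a single bound over $B_v$.

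For the $p_1$ contribution, the key geometric observation is that the proximity range satisfies $r_A > 2 r_B$, so in particular $B_v$ is contained in the proximity region around $v$. Hence the number of nodes in $B_v$ is at most $\Delta^v_A \le \Delta^A$, and the sum of their transmission probabilities is bounded by
\[
\Delta^A \cdot p_1 \;=\; \Delta^A \cdot \frac{1}{2\Delta^A} \;=\; \frac{1}{2}.
\]
This bounds the $p_1$-contribution by $1/2$ regardless of which subset of nodes in $B_v$ is currently active under $p_1$.

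For the $p_2$ contribution I would use the fact that nodes transmitting with $p_2$ in Algorithm~\ref{algo:colorreduction} (and correspondingly in the asynchronous version) are of a restricted type: they are nodes that have, at this point in the execution, committed to a color and therefore participate in color reduction in a coordinated way — in particular, any two such simultaneously transmitting nodes in $B_v$ must be non-neighbors (independent) by construction of the algorithm. A standard Euclidean packing argument then bounds the number of pairwise non-adjacent nodes inside the disk $B_v$: disks of radius $r_B/2$ around each such node are pairwise disjoint and all lie within the concentric disk of radius $3r_B/2$ around $v$, so there can be at most $(3r_B/2)^2/(r_B/2)^2 = 9$ of them; even under the weaker independence notion coming from overlap of color classes across a few rounds, the packing constant stays well below $180$. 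Consequently the $p_2$-contribution is at most $180 \cdot p_2 = 1$, and combined with the slack of $1/2$ from the $p_1$-contribution the total stays at most $1$.

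The main obstacle is getting the packing constant for the $p_2$ case tight enough to remain below $180$ while covering every situation that can arise in the two algorithms — in particular, in \cref{algo:async-color-reduction} different nodes in $B_v$ may be at different stages of their asynchronous execution, so one has to argue that each simultaneously $p_2$-active node still belongs to an independent (or near-independent) subset before applying the disk-packing bound. Once this is verified, the remaining steps are purely the geometric counting and the arithmetic above.
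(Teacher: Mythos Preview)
Your overall decomposition---split the nodes in $B_v$ by whether they use $p_1$ or $p_2$ and bound each sum separately---is exactly what the paper does, and your $p_1$ bound of $\Delta^A\cdot p_1=\tfrac12$ is fine (the paper uses $\Delta\le\Delta^A$ instead, arriving at the same $\tfrac12$).

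The gap is in the $p_2$ step. You write that the packing constant ``stays well below $180$'' and then evaluate the $p_2$-contribution as $180\cdot p_2=1$. Adding this to the $\tfrac12$ from the $p_1$ part gives $\tfrac32$, not $1$; your sentence about ``slack of $1/2$'' does not rescue this. What you actually need is that at most $90$ nodes in $B_v$ transmit with $p_2$ simultaneously, so that the $p_2$-contribution is at most $90\cdot\tfrac{1}{180}=\tfrac12$. For the synchronous \ColorReduction{} this is easy (your packing argument already yields a constant far below $90$; the paper uses $5$). For \cref{algo:async-color-reduction}, however, the bound of $90$ is not a generic packing fact: the paper invokes a dedicated structural lemma about the asynchronous schedule (\cref{lem:async-schedule-const-neighbors}) to get exactly this constant, and the choice $p_2=\tfrac{1}{180}$ is tuned to it. Your proposal correctly flags the asynchronous case as the main obstacle but does not supply the argument; and the arithmetic you do write down would fail even if you had it, because you allowed the count to drift up to $180$ instead of $90$.
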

The proof is a simple summation over the number of nodes in the
broadcasting region and the transmission probabilities used for nodes
in the algorithms. Due to space restrictions it is deferred to
Appendix~\ref{sec:omitted-proofs}.  Let us now state the main theorem
of this section.

\begin{restatable}{theorem}{sinrthm}
  \label{thm:sinr}
  Given a network of nodes. Let~$v$ be an arbitrary node transmitting
  with probability~$p>0$ in each time slot, and let the sum of
  transmission probabilities from within each broadcasting region be
  at most 1. A transmission from~$v$ is successful within
 ~$\frac{\lambda \ln 12}{p}$ time slots with probability at
  least~$\frac{11}{12}$.
\end{restatable}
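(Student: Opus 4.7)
The plan is to lower-bound the probability of success in a single time slot by a quantity proportional to $p$, and then to combine this bound across the $T := \lambda \ln 12 / p$ time slots using independence of the per-slot coin flips.

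First I would decompose a successful transmission from $v$ in a fixed time slot into three events: (i) $v$ itself transmits, which happens with probability $p$; (ii) no other node in $v$'s proximity region $A_v$ transmits in this slot, which happens with probability at least $P^{A_v}_{\text{none}} = (1/4)^{\O(1)}$ by the argument sketched in the preliminaries (and detailed in Appendix~\ref{sec:succ-transm-with}); and (iii) conditioned on (i) and (ii), the SINR inequality at every neighbor of $v$ holds despite global interference from transmitters outside $A_v$, which by the assumption that the sum of transmission probabilities in each broadcasting region is at most $1$ together with a standard Markov/expectation argument over far-away interferers contributes a factor of at least $P^v_{\text{SINR}} = 1/2$. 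The crucial point, already noted in the preliminaries, is that the bounds $P^{A_v}_{\text{none}}$ and $P^v_{\text{SINR}}$ depend only on the transmission-probability constraint and on the SINR hardware constants, not on the particular node $v$ or on random choices elsewhere. Multiplying the three contributions, the per-slot success probability is at least $p \cdot P^{A_v}_{\text{none}} \cdot P^v_{\text{SINR}} = p/\lambda$.

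Since the transmission decisions and the thermal-noise realizations are independent across time slots, the failure probability factorizes. Hence the probability that $v$ fails in every one of the $T = \lambda \ln 12 / p$ slots is at most
\begin{align*}
  \left(1 - \frac{p}{\lambda}\right)^{T}
  \;=\; \left(1 + \frac{-p/\lambda \cdot T}{T}\right)^{T}
  \;\leq\; e^{-pT/\lambda}
  \;=\; e^{-\ln 12} \;=\; \frac{1}{12},
\end{align*}
where the inequality is Fact~\ref{fact:math-fact} applied with $t = -pT/\lambda$ and $n = T$ (noting $|t| \leq n$ since $p/\lambda \leq 1$). Consequently the probability that at least one of the $T$ slots yields a successful local broadcast from $v$ is at least $11/12$, as claimed.

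The main obstacle is really bookkeeping rather than a single hard step: one must be careful that the bound $P^v_{\text{SINR}} = 1/2$ on handling the global interference is valid simultaneously for all neighbors of $v$ (not just one target receiver) and is robust to the actual randomness at distant nodes, so that it combines cleanly with the purely local event in (ii). This is exactly what Lemma~\ref{lem:sum-tx-prob} buys us, and once that uniformity is in place the rest is the elementary $(1-p/\lambda)^T \leq e^{-\ln 12}$ computation above.
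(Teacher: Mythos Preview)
Your proposal is correct and follows essentially the same approach as the paper: you factor the per-slot success probability as $p \cdot P^{A_v}_{\text{none}} \cdot P^v_{\text{SINR}} = p/\lambda$ using exactly the two observations the paper isolates (no other transmitter in the proximity region, and the SINR constraint holding via a Markov argument on far-away interference), and then you apply Fact~\ref{fact:math-fact} to bound $(1-p/\lambda)^{\lambda \ln 12/p}\le e^{-\ln 12}=1/12$. The only cosmetic difference is that you make the across-slot independence explicit, whereas the paper leaves it implicit in the final computation.
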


The proof of the theorem is along the lines of the proof that local
broadcasting can be achieved in~$\O(\Delta \log n)$ time slots
\cite{gmw-lbpim-08}. Thus, we shall only give a sketch of the proof
here. For the sake of completeness, an adaptation of the full proof to our
theorem can be found in Appendix~\ref{sec:succ-transm-with}.

\begin{proof}
  The proof is based on two major observations, which are both implied
  by the fact that the sum of transmission probabilities from within
  each broadcasting range is bounded by 1.
\begin{enumerate}
\item The probability~$P^{A_v}_\text{none}$ that~$v$ is the only node
  in the proximity region that transmits a signal in the current time
  slot is constant.
\item The probability~$P^v_\text{SINR}$ that the SINR constraint
  holds for a given transmission of~$v$ is constant. 
\end{enumerate}
By combining these probabilities with the transmission probability it
follows that the probability that~$v$ successfully transmits a message
to all neighbors in a given time slot is at least
$p \cdot P^{A_v}_\text{none} \cdot P^v_\text{SINR}$. As we set
$\lambda := \left( P^{A_v}_\text{none} \cdot P^v_\text{SINR}
\right)^{-1}$,
the probability for a successful transmission to all neighbors
after~$\kappa_0 = \frac{\lambda \ln 12}{p}$ time slots is at least
\begin{align*}
  1- \left( 1 - \frac{p}{\lambda} \right)^{\frac{\lambda \ln 12}{p} }
  \geq 1- e^{-\ln 12} \geq 1- \frac{1}{12} \geq \frac{11}{12}, 
\end{align*}
which proves the theorem.
\end{proof}

A combination of Lemma~\ref{lem:sum-tx-prob} and
Theorem~\ref{thm:sinr} implies the following two results. The first
states that in each phase of Algorithm~\ref{algo:rand4deltacolor}, the
currently selected color is transmitted to all neighbors with constant
probability, which enables us to analyze Algorithm~\ref{algo:rand4deltacolor} and
correctly account for the uncertainty in the message transmission in
this algorithm in \cref{sec:analysis}.  
\begin{corollary}
  \label{cor:randcolor-const-tx-succ}
  Let~$v$ be a node in the network executing
  Algorithm~\ref{algo:rand4deltacolor}. Then~$v$ successfully
  transmits its color to its neighbors with
  probability~$\pts(v) \geq \frac{11}{12}$ in each phase of
  length~$\kappa_0 \in \O(\Delta)$.
\end{corollary}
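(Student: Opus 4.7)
The plan is to deduce the corollary directly by specialising Theorem~\ref{thm:sinr} to the transmission probability $p_1 = \frac{1}{2\Delta^A}$ used by Algorithm~\ref{algo:rand4deltacolor}, after first checking the hypothesis of that theorem via Lemma~\ref{lem:sum-tx-prob}.

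First I would invoke Lemma~\ref{lem:sum-tx-prob}: because every node in the network runs Algorithm~\ref{algo:rand4deltacolor}, the sum of transmission probabilities within any broadcasting region is at most $1$. This is exactly the local density condition that Theorem~\ref{thm:sinr} requires. Second, I would apply Theorem~\ref{thm:sinr} to the node $v$ with $p = p_1 = \tfrac{1}{2\Delta^A}$. The theorem then guarantees that $v$ successfully transmits to all its neighbors within $\frac{\lambda \ln 12}{p_1}$ time slots with probability at least $\tfrac{11}{12}$.

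Third, I would observe that the phase length $\kappa_0$ is defined in Section~\ref{sec:prelim-coloring} as $\kappa_0 = \frac{\lambda \ln 12}{p_1}$, so the length used in the corollary is literally the bound given by the theorem. Since $\Delta^A \in \O(\Delta)$ and $\lambda$ is a constant (as noted after its definition, $P^{A_v}_\text{none}$ and $P^v_\text{SINR}$ are constants), we get $\kappa_0 = 2\Delta^A \lambda \ln 12 \in \O(\Delta)$, matching the stated asymptotic bound on the phase length.

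There is no real obstacle here: the corollary is essentially a bookkeeping step that instantiates Theorem~\ref{thm:sinr} with the specific transmission probability used by Algorithm~\ref{algo:rand4deltacolor}. The only thing to be a bit careful about is that the bound $\pts(v) \geq \tfrac{11}{12}$ holds uniformly over all nodes $v$ and all phases, but this follows because the bound from Theorem~\ref{thm:sinr} is independent of $v$ and the transmission probabilities of other nodes in each phase satisfy the same sum condition by Lemma~\ref{lem:sum-tx-prob}. Hence, in every phase of length $\kappa_0$, each node $v$ transmits its current color successfully with probability at least $\tfrac{11}{12}$, which is exactly the claim.
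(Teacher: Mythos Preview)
Your proposal is correct and matches the paper's intended argument: the paper states the corollary as a direct consequence of combining Lemma~\ref{lem:sum-tx-prob} with Theorem~\ref{thm:sinr}, and your proof carries out exactly that instantiation with $p = p_1$ and the observation that $\kappa_0 = \lambda \ln 12 / p_1 \in \O(\Delta)$.
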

The second result applies to the message transmission in the color
reduction scheme of Algorithms~\ref{algo:colorreduction}
and~\ref{algo:async-color-reduction}. In order to make the lemma
applicable to both algorithms, we state the result in a general
way. It includes the standard local broadcasting bounds known from
\cite{gmw-lbpim-08}, but also enables the algorithms to use a more
coordinated (and faster) medium access based on the coloring. Local
broadcasting achieves successful message transmission from each node
in the network in~$\O(\Delta \log n)$ time slots, while the more
coordinated approach used in the color reduction scheme achieves
successful message transmission from a constant number of nodes in
each broadcasting region to their neighbors in~$\O(\log n)$ time
slots.

\begin{lemma}
  \label{lem:sinr-lb-and-cr-tx}
  Given a network of nodes. Let~$v$ be a node transmitting
  with probability~$p_\ell$ in each time slot according to
  Algorithm~\ref{algo:colorreduction} or~\ref{algo:async-color-reduction}. Then a
  transmission from~$v$ is successful within~$\kappa_\ell$ time slots
  w.h.p. for~$l \in \{1,2\}$.
\end{lemma}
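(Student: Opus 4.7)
The plan is to reduce this lemma to \cref{thm:sinr} applied with the appropriate transmission probability, and then iterate the constant-probability guarantee until the failure probability is polynomially small in $n$. The two hypotheses of \cref{thm:sinr} are: $v$ transmits in each slot with a fixed probability $p>0$, and the sum of transmission probabilities inside every broadcasting region is at most $1$. The first is immediate from the algorithm specifications (nodes use $p_1$ or $p_2$). The second is exactly what \cref{lem:sum-tx-prob} provides for \cref{algo:randcolor} and \cref{algo:async-color-reduction}; for \cref{algo:colorreduction} the argument is the same summation (the very small value $p_2 = 1/180$ combined with the bound on concurrently active senders per broadcasting region yields a sum below $1$), so the second hypothesis holds in every case the lemma is invoked.

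Given these hypotheses, the analysis inside the proof of \cref{thm:sinr} establishes that in \emph{any} single time slot the probability that $v$'s message is correctly received by every node in $N_v$ is at least $p_\ell \cdot P^{A_v}_\text{none} \cdot P^v_\text{SINR} = p_\ell/\lambda$. I would use this single-slot lower bound as the main building block, rather than invoking the $11/12$ bound from \cref{thm:sinr} in a black-box fashion, because it yields a cleaner calculation when boosting to high probability.

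Next I would exploit independence across time slots: each node independently flips its transmission coin in every slot, so the per-slot success events (as lower-bounded above) can be treated as independent Bernoulli trials. Therefore the probability that $v$ fails to transmit successfully in every one of $\kappa_\ell$ consecutive time slots is at most
\begin{align*}
  \left(1 - \frac{p_\ell}{\lambda}\right)^{\kappa_\ell}
  \;=\; \left(1 - \frac{p_\ell}{\lambda}\right)^{c\lambda \ln n / p_\ell}.
\end{align*}
Applying \cref{fact:math-fact} with $t = -p_\ell$ and $n=\lambda$ (which is valid since $\lambda \geq 1$ and $p_\ell \leq 1$), the inner factor $(1 - p_\ell/\lambda)^{\lambda}$ is at most $e^{-p_\ell}$, so the whole expression is at most $e^{-c\ln n} = n^{-c}$. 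Since $c>1$ was fixed, this is a high-probability bound, and substituting the two possible values of $\ell$ gives the two claimed runtimes ($\kappa_1 \in \O(\Delta\log n)$ and $\kappa_2 \in \O(\log n)$).

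The main thing to get right is not a genuine obstacle but a bookkeeping check: I need to ensure \cref{lem:sum-tx-prob} (or its direct analogue) actually covers whichever slot-set and whichever of $p_1,p_2$ is in force at the moment the lemma is applied, including inside \cref{algo:colorreduction} where both transmission probabilities can appear depending on the phase. Once that sum-of-probabilities hypothesis is in place, the rest is the routine amplification above, and no further property of the algorithms is needed.
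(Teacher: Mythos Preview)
Your proposal is correct and follows essentially the same approach as the paper: verify the sum-of-transmission-probabilities hypothesis via \cref{lem:sum-tx-prob}, extract from the proof of \cref{thm:sinr} the per-slot success lower bound $p_\ell/\lambda$, and then amplify over $\kappa_\ell = c\lambda\ln n/p_\ell$ slots to get failure probability at most $n^{-c}$. Your bookkeeping worry about \cref{algo:colorreduction} is unnecessary, since that algorithm is a subroutine of \cref{algo:randcolor} and is therefore already covered by \cref{lem:sum-tx-prob}; the paper handles this in one sentence by noting that $p_2$ was chosen precisely so that the constant number of simultaneously active high-probability senders do not violate the bound.
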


\begin{proof}
  Note that most nodes transmit with probability~$p_1$ in both
  algorithms and only few, i.e., a constant number of nodes in each
  broadcasting range, are allowed to transmit with probability~$p_2$
  simultaneously. Having all nodes transmitting with~$p_1$ for
 ~$\kappa_1 \in \O(\Delta \log n)$ time slots is essentially the
  well-known result on local broadcasting by Goussevskaia
  \emph{et. al.} \cite{gmw-lbpim-08}.  The transmission
  probability~$p_2$ is constant and selected such that the constant
  number of nodes required by the respective algorithms may use~$p_2$
  without violating \cref{lem:sum-tx-prob}.

  Similar to the argumentation in \cref{thm:sinr}, attempting to
  transmit with probability~$p_\ell$ for~$\kappa_\ell$ time slots
  ($\ell=1,2$) yields a failure probability of at most
\begin{align*}
  1- \left( 1 - \frac{p_\ell}{\lambda} \right)^{\kappa_\ell}
  \geq 1- e^{c\ln n} \geq 1- \frac{1}{n^c},
\end{align*}
which completes the proof.
\end{proof}

This implies that the communication used in our algorithms is
successful with the corresponding probabilities. We shall prove the
correctness of our algorithms in the following.

\section{Synchronous Setting}
\label{sec:algorithm}

The algorithm we propose is at its heart a very simple and well-known
randomized coloring algorithm. The underlying approach dates back to
an algorithm to compute maximal independent sets by Luby
\cite{l-spamis-86}, and is covered for example in
\cite[Chapter~10]{be-dcg-13}.  Essentially, this kind of algorithms
draw a random color whenever two neighboring nodes have the same color
(i.e., there is a conflict between them).  However, in contrast to
previous algorithms of this kind, we do not assume that successful
communication is guaranteed by lower layers. Instead we allow the
uncertainty in the randomized algorithm to unite with the uncertainty
in the communication in the SINR model, which is jointly handled in
the analysis.  Thereby we can reduce the number of time slots required
for each phase by a~$\log n$ factor (from~$\O(\Delta \log n)$ for the
trivial analysis to~$\O(\Delta)$), making this simple approach
viable in the SINR model.

\LinesNumberedHidden
\SetKw{KwAlgA}{\refstepcounter{subalgo}Algorithm \thesubalgo:\label{algo:rand4deltacolor}}
\SetKw{KwAlgB}{\refstepcounter{subalgo}Algorithm \thesubalgo:\label{algo:colorreduction}} 
\begin{algorithm}
\DontPrintSemicolon
\hspace{-0.25cm}\KwAlgA{{\sc Rand4DeltaColoring} \textnormal{for $v$}} \; 
\blockline
\setcounter{AlgoLine}{0}
\ShowLn $F_v \gets [4\Delta]$, $c_v^{-1} \gets F_v.$random() \;
\ShowLn \For(\tcp*[f]{each one phase}){$t\gets 0; t
  \leq 6 (c+3) \ln n; t \gets t+1$} {  
\ShowLn \lIf(\tcp*[f]{if conflict, new color}){$c^{t-1}_v \not \in F_v
$} {$c^t_v \gets F_v$.rand()}
\ShowLn  \lElse(\tcp*[f]{otherwise, keep it}){$c^t_v \gets c^{t-1}_v$}
\ShowLn $F_v \gets [4\Delta]$ \;
\ShowLn  Transmit $c^t_v$ with probability $p_a$ for $\kappa_0$ time slots\;
\ShowLn  \lForEach {received color $c^t_w$ from neighbor $w \in N_v$} {
 $F_v \gets F_v \backslash \{c^t_w\}$
  }
}
\blockline  
\hspace{-0.25cm}\KwAlgB{{\sc ColorReduction} \textnormal{for node $v$}} \;
\blockline 
\setcounter{AlgoLine}{7} 
\ShowLn $F_v \gets [\Delta]$ \;
\ShowLn \ForEach {color $c_i = i \in [4\Delta]$} {
\ShowLn \If(\tcp*[f]{wait}){$c^t_v \not = i$} {
\ShowLn listen for $\kappa_2$ time slots\;
\ShowLn  \lForEach {received color $c_w$ from neighbor $w \in N_v$} {
      $F_v \gets F_v \backslash {c_w}$
    }
  }
\ShowLn  \Else(\tcp*[f]{otherwise, select and transmit final color}){
\ShowLn    $c_v \gets F_v$.random()\;
\ShowLn    Transmit $c_v$ with probability $p_2$ for $\kappa_2$ time slots\;
  }
}
\caption{{\sc RandColoring} for node $v$}
\label{algo:randcolor}
\end{algorithm}

\begin{figure*}[tb]
  \centering
    \includegraphics[width=0.95\textwidth]{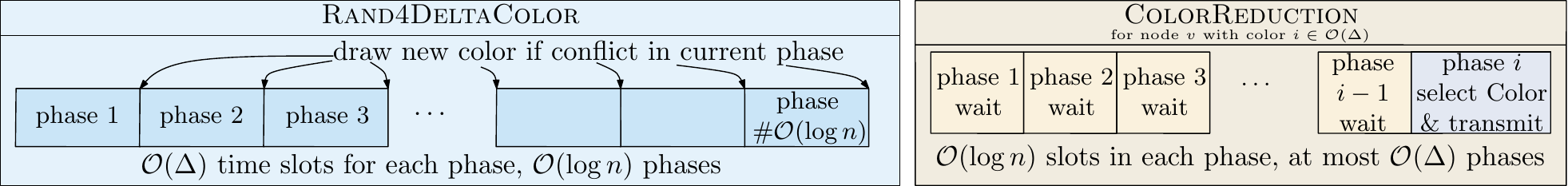}
    \caption{Illustrations of the synchronous Algorithms. Left:
      Algorithm~\ref{algo:rand4deltacolor}. Right: Algorithm~\ref{algo:colorreduction}.}
    \label{fig:sync-algorithm-illustration}
\end{figure*}

Our first algorithm (\RandCDeltaColoring,
Algorithm~\ref{algo:rand4deltacolor} and
\cref{fig:sync-algorithm-illustration}) is a simple, phase-based
coloring algorithm.  In each phase the node checks whether it knows of
a conflict with one of its neighbors. If so, it randomly draws a new
color from the set $F_v$ of colors not taken in the previous
phase. Finally, the phase is concluded by transmitting the current
color. This part computes a valid coloring with~$4\Delta$ colors
in~$\O(\log n)$ phases, while each phase takes~$\O(\Delta)$ time
slots.  Our main contribution regarding this algorithm lies in the
analysis, as phases of length~$\O(\Delta)$ are not sufficient to
guarantee successful message transmission. This allows us to save a
$\log n$ factor, at the cost of handling uncertain message
transmission in the analysis of the algorithm.

Assume the nodes have agreed on a valid coloring with~$d$ colors.  The
second algorithm (\ColorReduction, Algorithm~\ref{algo:colorreduction}
and \cref{fig:sync-algorithm-illustration}) improves the coloring by
reducing the number of colors to~$\Delta+1$.  The general scheme is
that the nodes in the networks gradually replace their color by a
color from $[\Delta]$. In each phase, the nodes of exactly one color
select a new color from~$[\Delta]$ and broadcast their selection to
their neighbors. As each color forms an independent set, those nodes
are free to select a color that is not yet taken in their neighborhood
and communicate the selected color to the neighbors without
introducing a conflict.  In order to make the algorithm efficient in
the SINR model, we utilize the given coloring to improve the
communication efficiency by building a tentative schedule for the
transmissions of each node. As nodes of exactly one color attempt to
transmit in each phase, the nodes of this color can successfully
transmit their (new) color selection to neighboring nodes with high
probability in only~$\O(\log n)$ time slots instead
of~$\O(\Delta \log n)$. As each node that selects a final color
from~$[\Delta]$ knows of its neighbors colors, it can select and
communicate its color without introducing a conflict. Overall this
algorithm requires~$d$ phases to reduce a~$d$ coloring to a~$\Delta+1$
coloring. As each phase requires $\O(\log n)$ time slots this results
in~$\O(d \log n)$ time slots.

Our algorithms are combined to compute a~$\Delta+1$ coloring from
scratch.  Directly after finishing the execution of
Algorithm~\ref{algo:rand4deltacolor}, the nodes begin executing
Algorithm~\ref{algo:colorreduction} in order to compute a $\Delta+1$
coloring from scratch in~$\O(\Delta \log n)$ time slots.  Both parts
of the algorithm are well-known in the $\mathcal{LOCAL}$ model,
however, simply executing the algorithms in the SINR model leads to a
runtime~$\O(\Delta \log^2 n + \Delta^2 \log n)$ time slots.  Due to
the careful combination of the algorithms with specifics of the SINR
model, along with handling the uncertainty of successful message
transmission both in the algorithm and the analysis, we are able to
execute both parts of the algorithm in $\O(\Delta \log n)$ time
slots. Thus, Algorithm~\ref{algo:randcolor} solves the~$\Delta+1$
coloring problem in~$\O(\Delta \log n)$ time slots, which matches the
runtime of one execution of local broadcasting in the SINR model.

\subsection{Analysis of \RandCDeltaColoring}
\label{sec:analysis}

Despite the fact that the underlying coloring algorithm is well-known,
our analysis is new and quite involved.  The main reason for this is
the uncertainty in whether a message is successfully delivered in one
phase of Algorithm~\ref{algo:rand4deltacolor}. In contrast to
guaranteed message delivery based for example on local broadcasting,
message delivery with constant probability can be achieved a
logarithmic factor faster, see Section~\ref{sec:sinr-model-related}.
However, this reduction in runtime comes at a cost: While in the
guaranteed message delivery setting, a node can finalize its color
once a phase without a conflict happened, this is not possible in our
setting, as we cannot guarantee the validity of the colors even if a
node did not receive a message implying a conflict in one phase due
to message transmission with only constant probability.  Nevertheless, we can show
that after~$\O(\log n)$ phases of transmitting the selected color and
resolving eventual conflicts, the coloring is valid in the entire
network w.h.p.

Our formal analysis shall be structured as follows. We show in this
section that the probability for a conflict indeed reduces by a
constant factor in each phase. This is the foundation for the result
that the first part of our coloring algorithm computes a valid
$4\Delta$ coloring in~$\O(\Delta \log n)$ time slots w.h.p., which is
the main result in this section.  In \cref{sec:color-reduction} we
show that Algorithm~\ref{algo:colorreduction} transforms a valid
$4\Delta$ coloring into a valid~$\Delta+1$ coloring with high
probability and conclude that the total runtime of
\cref{algo:randcolor} is~$\O(\Delta\log n)$ time slots.

In order to prove correctness of Algorithm~\ref{algo:rand4deltacolor}
(\RandCDeltaColoring) we shall first bound the probability of a
conflict propagating from one phase of the algorithm to the next.
This enables us to show that each node that participates in the
algorithm selects a valid color.  Based on the probability that a
node~$v$ has a conflict in phase~$t$, there are only two cases that
may lead to a conflict at~$v$ in phase~$t+1$:
\begin{enumerate}
\item $v$ had a conflict in phase~$t$, and it did not get resolved
  (either due to being unaware of the conflict or since the new color
  implies a conflict as well).
\item a neighbor of~$v$ had a conflict in phase~$t$ and introduced
  the conflict by randomly selecting~$v$'s color.
\end{enumerate}
We shall show that the probability for both cases is constant (see
Lemma~\ref{lem:phase-conflict-prob}). Thus, after~$\O(\log n)$
phases it holds with high probability that a valid color has
been found. Let us now state the main result of this section.

\begin{theorem}
  \label{thm:colors-valid-whp}
  Let all nodes start executing Algorithm~\ref{algo:randcolor}
  simultaneously. After the execution, all nodes have a valid color
  with probability~$1-1/n^{c}$.
\end{theorem}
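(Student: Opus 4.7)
The plan is to prove Theorem~\ref{thm:colors-valid-whp} by separately controlling the two subroutines composing Algorithm~\ref{algo:randcolor} and combining their failure probabilities via a final union bound.

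The heart of the argument is a per-phase conflict-contraction lemma (foreshadowed as Lemma~\ref{lem:phase-conflict-prob}) for Algorithm~\ref{algo:rand4deltacolor}. Writing $p^t := \max_{v\in V}\P[\etc(v)]$, I would prove $p^{t+1} \leq q\cdot p^t$ for some absolute constant $q<1$ that does not depend on $\Delta$ or $n$. Following the two scenarios enumerated in the paper, I split $\E^{t+1}_\text{confl}(v)$ according to whether $v$ was already in conflict at the end of phase $t$. In the first case, the conflict survives only if either $v$ failed to receive any conflicting neighbor's transmission (probability at most $1/12$ by Corollary~\ref{cor:randcolor-const-tx-succ}) or $v$ was informed and resampled uniformly from $F_v$, which has size at least $4\Delta-|N_v|\geq 3\Delta$, landing again on one of its at most $\Delta$ neighbors' colors (probability at most $1/3$). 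In the second case, $v$ was conflict-free in phase $t$, hence $c^{t+1}_v=c^t_v$, and a new conflict at $v$ requires some neighbor $w$ that had its own conflict in phase $t$ to resample and hit this fixed color; this contributes at most $\P[\etc(w)]/(3\Delta)\leq p^t/(3\Delta)$ per neighbor, and at most $p^t/3$ once summed over $v$'s $\leq \Delta$ neighbors. Combining the two cases yields a constant contraction, e.g.\ $q=3/4$.

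Iterating $p^{t+1}\leq q\cdot p^t$ over the $T=6(c+3)\ln n$ phases prescribed by Algorithm~\ref{algo:rand4deltacolor} and using Fact~\ref{fact:math-fact} to bound $q^T$ by $n^{-(c+2)}$ (provided the leading constant is chosen large enough relative to $1/\log(1/q)$), a union bound over the $n$ nodes shows that after Algorithm~\ref{algo:rand4deltacolor} every node holds a valid color in a valid $4\Delta$-coloring with probability at least $1-n^{-(c+1)}$. For the subsequent Algorithm~\ref{algo:colorreduction}, I would invoke Lemma~\ref{lem:sinr-lb-and-cr-tx}: across its $4\Delta$ phases of length $\kappa_2$, every color-announcement succeeds with probability at least $1-n^{-(c+2)}$, and a union bound over the $\O(\Delta\cdot n)$ announcements keeps the reduction's failure probability below $n^{-(c+1)}$. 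Adding the two contributions gives the claimed $1-1/n^c$.

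The main obstacle will be making the contraction fully rigorous in the presence of correlation: the events $\etc(v)$ and $\etc(w)$ for a neighbor $w$ are coupled through the shared SINR environment and through earlier random color draws, so to justify using the common upper bound $p^t$ for each $\P[\etc(w)]$ simultaneously one must either condition on the past history before a phase begins or argue that the fresh per-phase randomness (the resampling from $F_w$ and the independent transmission decisions of the different phases) dominates. The pleasing structural feature is that the constant per-transmission failure probability furnished by Theorem~\ref{thm:sinr} is absorbed directly into $q$ rather than paid as a separate high-probability cost per phase — this is precisely the mechanism that saves the promised $\log n$ factor relative to the naive layered analysis.
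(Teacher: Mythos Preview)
Your proposal is correct and follows essentially the same structure as the paper's proof: a per-phase contraction lemma (the paper's Lemma~\ref{lem:phase-conflict-prob}, which obtains $q=5/6$ via a finer three-subcase split of your Case~1, whereas your coarser two-subcase split yields $q=3/4$), iterated over the $6(c+3)\ln n$ phases to give Proposition~\ref{prop:randcolor-valid-whp}, followed by Lemma~\ref{lem:color-reduction-correct} for \ColorReduction{} and a final union bound over all nodes. The correlation concern you raise is addressed in the paper exactly as you anticipate---by arguing that the per-phase bounds $\pts,\ptf$ are node-independent even when the underlying events are not---so your plan matches the paper's at every essential step.
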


Before proving this result let us formulate and prove the following
lemma, which lays the foundations to prove
Theorem~\ref{thm:colors-valid-whp}.  In this lemma we bound the
probability of a conflict in phase~$t+1$ based on the probability of a
conflict in phase~$t$.

\begin{lemma}
  \label{lem:phase-conflict-prob}
  Let~$v$ be an arbitrary node and~$\ptc(v)$
  the probability of a conflict at~$v$ in phase~$t$. Then the
  probability of a conflict at~$v$ in phase~$t+1$ is at most
  \[\pc^{t+1}(v) \leq \frac{5}{6} \cdot \max_{w \in
  N_v} \ptc(w).\]
\end{lemma}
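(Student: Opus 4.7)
My plan is to bound $\pc^{t+1}(v)$ by decomposing the conflict event in phase $t+1$ according to whether $v$ and the offending neighbor kept or redrew their color, then exploit the independence of the color-draw randomness from the transmission randomness. Write $q := \max_{w \in N_v} \ptc(w)$. Two basic facts I will use throughout: (i) $F_v$ always has size at least $3\Delta$, since it is $[4\Delta]$ minus at most $|N_v|\le\Delta$ received colors, so a uniform draw from $F_v$ hits any prescribed target with probability at most $1/(3\Delta)$; (ii) whether $v$ keeps $c^t_v$ for phase $t+1$ (i.e.\ whether $c^t_v\in F_v$ after line~7 of phase~$t$) is a function of the phase-$t$ transmission outcomes only, which are determined by the MAC-layer coin flips and are therefore independent of the colors selected at line~3.

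I will then decompose $\pc^{t+1}(v)\le P(E_1)+P(E_2)+P(E_3)$, where
\begin{align*}
E_1 &= \{v\text{ keeps}\}\cap\{\exists u\in N_v:\ c^t_u=c^t_v\ \text{and}\ u\text{ keeps}\},\\
E_2 &= \{v\text{ keeps}\}\cap\{\exists u\in N_v:\ u\text{ redraws and picks }c^t_v\},\\
E_3 &= \{v\text{ redraws}\}\cap\{\exists u\in N_v:\ c^{t+1}_v=c^{t+1}_u\}.
\end{align*}
This covers every way a conflict at $v$ can arise in phase $t+1$: if $v$ keeps then $c^{t+1}_v=c^t_v$, and a colliding neighbor must either have held a matching color ($E_1$) or redrew to $c^t_v$ ($E_2$); otherwise $v$ redraws and every collision is captured by $E_3$.

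For $E_1$ the plan is to note that if $u$ keeps while $c^t_u=c^t_v$, then $u$ did not hear the conflict, so $v$'s phase-$t$ transmission failed to reach $u$; hence $E_1\subseteq\etc(v)\cap\etf(v)$, and fact (ii) together with \cref{cor:randcolor-const-tx-succ} gives $P(E_1)\le \ptc(v)\cdot\ptf(v)\le q/12$. For $E_2$, since $u$ only redraws if it detected a conflict, $P(u\text{ redraws})\le\ptc(u)\le q$; conditional on redrawing, the new color hits the fixed target $c^t_v$ with probability at most $1/(3\Delta)$ by fact (i), and a union bound over the at most $\Delta$ neighbors yields $P(E_2)\le q/3$. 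For $E_3$, analogously, $P(v\text{ redraws})\le\ptc(v)\le q$, and conditional on this $c^{t+1}_v$ is uniform in $F_v$, so for each $u\in N_v$ the match has probability at most $1/(3\Delta)$; summing gives $P(E_3)\le q/3$. Adding yields $\pc^{t+1}(v)\le q/12+q/3+q/3=3q/4\le \tfrac{5}{6}q$.

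The main obstacle is the treatment of $E_1$. The naive bound $P(E_1)\le\ptc(v)=q$ is far too weak and would destroy the geometric contraction that drives the overall analysis. The saving comes from cashing in the factor $\ptf(v)\le 1/12$, which is only legitimate after carefully arguing that the color configuration at the start of phase $t$ is independent of the MAC-layer coin flips used during phase $t$, so that the two probabilities multiply. That independence check is the delicate point, and it is precisely the ``combined uncertainty'' viewpoint advertised by the paper that lets the phase length be $\kappa_0\in\O(\Delta)$ rather than $\O(\Delta\log n)$.
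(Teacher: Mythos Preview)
Your argument is correct and obtains the stronger constant $3/4$ in place of the paper's $5/6$, but your decomposition is genuinely different from the one in the paper, so it is worth recording the contrast.

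The paper splits according to whether $v$ itself was in conflict in phase~$t$. In Case~1 ($\etc(v)$ holds) it further conditions on the pair of transmission outcomes $(\ets(v),\ \exists w\in X^t(v):\ets(w))$, producing three subcases with contributions $\ptc(v)/144$, $\ptc(v)/24$, and $\ptc(v)/3$, for a total below $\ptc(v)/2$. Case~2 (a neighbour $w$ with a conflict redraws onto $v$'s colour) contributes $\tfrac13\max_{w\in N_v}\ptc(w)$, and the two pieces sum to $5/6$ times the maximum over $N_v^+$. Your decomposition instead splits on the \emph{algorithmic} outcome ``keep versus redraw'' at $v$ and at the offending neighbour. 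This avoids the three-way subcase analysis entirely: your $E_1$ collapses the paper's subcases~1(a) and~1(b) into a single containment $E_1\subseteq\etc(v)\cap\etf(v)$, which is clean and sufficient because one failed transmission $v\to u$ already bounds the probability by $1/12$; your $E_3$ absorbs the paper's subcase~1(c); and your $E_2$ corresponds to the paper's Case~2. The payoff is a shorter proof and a sharper contraction factor.

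One small point you share with the paper: you silently use $\ptc(v)\le q$ with $q:=\max_{w\in N_v}\ptc(w)$ in the bounds for $E_1$ and $E_3$, but the paper defines $N_v$ to exclude $v$, so this inequality is not automatic. The paper's own proof in fact concludes with $\max_{w\in N_v^+}\ptc(w)$ rather than $\max_{w\in N_v}\ptc(w)$; the discrepancy is harmless because the only use of the lemma (in \cref{prop:randcolor-valid-whp}) immediately relaxes to $\max_{w\in V}$. Still, to be precise you should either write $q:=\max_{w\in N_v^+}\ptc(w)$ or note explicitly that your final bound is $\tfrac{5}{12}\ptc(v)+\tfrac13\max_{w\in N_v}\ptc(w)\le\tfrac34\max_{w\in N_v^+}\ptc(w)$.
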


\begin{proof}
  We shall prove the lemma by considering the two cases that may lead
  to a conflict at node~$v$ in phase~$t+1$. The \textbf{first} case is
  that~$v$ has a conflict with at least one of its neighbors.
  Depending on which transmissions are successful there are 3
  subcases. Note that $\rightarrow$ denotes $\ets(v)$, while
  $\leftarrow$ denotes $\exists w \in X^t(v): \ets(w)$---with
  negations accordingly\footnote{A partial success of transmission is
    often sufficient to trigger dealing with a conflict.  We do not
    consider this in our notation, however, as we evaluate $\pts(v) $
    to be at most 1 for all~$v$ and since $\Pr(\text{transmission from
      $v$ to $u$ fails}) \leq \ptf(v) \leq 1/12$, our analysis covers
    this case.}.

  \begin{enumerate}[leftmargin=0pt,itemindent=48pt]
  \item[\textbf{(a)} $\not\rightarrow, \not\leftarrow$:] It is not
    guaranteed that any of the conflict partners know of the conflict,
    as the transmissions from~$v$ and the nodes in the conflict set
    $X^t(v) \neq \emptyset$ failed at least partially. There is at
    least one neighbor~$u \in X^t(v)$ that failed to transmit its
    color successfully to $v$, which happens with probability
    $\ptf(u)$.  Combined with $v$'s failure to transmit its color
    successfully, case~1(a) happens with probability at most
    $\ptc(v) (\ptf(v) \Pr(\not\leftarrow))$
    $\leq \ptc(v) \ptf(v) \ptf(u) \leq \ptc(v) (1/12)^2$. If any
    conflict partner knows of the conflict, the conflict would be
    resolved with a certain probability (as in the following
    cases). However, as this is not guaranteed, we account for the
    worst case, which is that the conflict is not resolved and
    propagates to the next phase.  Note that since this case happens
    only with a small probability, we have shown that the total
    probability of case~{(a)} and conflict at~$v$ in phase~$t+1$ is
    small.

  \item[\textbf{(b)} $\rightarrow, \not\leftarrow$:] All nodes in
    $X^t(v)$ failed to transmit successfully, but~$v$ transmitted
    successfully to all neighbors. Thus, all nodes in $X^t(v)$ know of
    the conflict, while~$v$ might be unaware of it. This case happens
    with probability at
    most~$\ptc(v) \cdot (\pts(v) \cdot \Pr(\not \leftarrow))$. The
    probability that a node~$w \in X^t(v)$ selects~$v$'s color in
    phase~$t+1$ is at most~$\sum_{w \in X^t(v)} 1/|F_w|$ (even if~$v$
    knows of a conflict and itself selects a new color). This results
    in an overall probability of at most
    \begin{align*}
      & \ptc(v) \cdot (\pts(v) \cdot
      \Pr(\not \leftarrow)) \cdot \sum_{w \in X^t(v)}
      \frac{1}{|F_w|} \\
      & \leq \ptc(v) \left( \prod_{w \in X^t(v)}
        \ptf(w) \right) \cdot \sum_{w \in X^t(v)}
      \frac{1}{|F_w|} \\
      & \overset{x := |X^t(v)|}{\leq} \ptc(v) \left( \ptf \right)^x
      \cdot \frac{x}{3\Delta} \\
      & \leq \frac{1}{3\Delta} \ptc \cdot x \left(
        \frac{1}{12} \right)^x \leq \frac{1}{24} \ptc
    \end{align*}
    where the first inequality holds since the event~$\not\leftarrow$
    is equivalent to~$\forall w \in X^t(v): \etf(w)$ and~$\pts(v) \leq
    1$. The second inequality holds since~$|F_w| \geq 3\Delta$ as~$w$
    and~$v$ are uncolored and by setting~$x = |X^t(v)|$. The last
    inequality holds since~$x (1/12)^x \leq 1/12$ for all~$x \in
    \{1,\dots,\Delta\}$, and~$\Delta \geq 1$.
    \addtolength{\itemindent}{-15pt}%change indent for following item
  \item[\textbf{(c)} $\leftarrow$:] It holds that~$v$ knows of the
    conflict. Whether the neighbors know of it or not is not
    guaranteed.  This case happens with probability at most
    $\ptc(v) \cdot (\Pr(\leftarrow))$. The probability that at least
    one neighbor of~$v$ has or selects the same color as~$v$ is at
    most
    \begin{align*}
      \sum_{w \in N_v} \frac{1}{|F_v|} \leq |N_v|
      \frac{1}{3\Delta} \leq \frac{1}{3}.
    \end{align*}

  \end{enumerate}

  Overall this results in a probability for a conflict at~$v$ in
  phase~$t+1$ of at most
  \begin{gather*}
    \ptc(v) \cdot \left( \frac{1}{144} + \frac{1}{24} + \frac{1}{3}
      \Pr(\leftarrow) \right) \\
    \leq \; \ptc(v) \cdot \left( \frac{1}{144} + 
      \frac{1}{24} + \frac{1}{3} \right) < \; \left( \frac{1}{2}
    \right) \cdot \ptc,
  \end{gather*}
  where the first inequality follows from $\Pr(\leftarrow) \leq 1$.

  The \textbf{second} case is that there was no conflict at~$v$ in
  phase~$t$, but a neighbor~$w$ of~$v$ selected~$v$'s color due to a
  conflict at~$w$. The probability of this event is at most
  \begin{gather*}
    \sum_{w \in N_v} \underbrace{\Pr(c^{t+1}_v =
    c^{t+1}_w)}_{\text{$v$'s neighbor $w$ selects $v$'s color}} \sum_{u \in N_w}
  \underbrace{ \Pr(c^t_u = c^t_w) %\pts(u)
    }_{
      \substack{\text{$u \in N(w)$ told $w$}\\
      \text{about their conflict}}
} \\
    \leq \sum_{w \in N_v} 
    \Pr(c^{t+1}_v = c^{t+1}_w) \ptc(w) \\
    \leq \; \sum_{w \in N_v}  
    \frac{1}{|F_w|} \ptc(w) \;\;
    \leq \; \left( \frac{1}{3} \right) \max_{w \in N_v} \ptc(w)
  \end{gather*}
  The last inequality holds since
  $\sum_{w \in N_v} \frac{1}{|F_w|} \leq \sum_{w \in N_v}
  \frac{1}{3\Delta} \leq \frac{1}{3}$.
  Combining all events that could lead to a conflict at~$v$ in
  phase~$t+1$ it holds that the probability of the union of the events
  is at most
  \begin{align*}
    \P^{t+1}_\text{confl}(v) &\leq \left( \frac{1}{2}\right) \ptc(v) +
    \left( \frac{1}{3} \right) \max_{w \in N_v} \ptc(w) \\
    & \leq \frac{5}{6} \cdot \max_{w \in N_v^+} \ptc(w),
  \end{align*}
  which concludes the proof.
\end{proof}

Note that the second case could be avoided if message delivery in each
phase would be guaranteed, as a node~$v$ that does not have a conflict
in phase~$t$, would simply finalize its current color and communicate
this. Thus, $v$ could not be forced into a conflict anymore. We shall
now show that a set of nodes executing
Algorithm~\ref{algo:rand4deltacolor} computes a valid coloring.

\begin{proposition}
  \label{prop:randcolor-valid-whp}
  Let all nodes in the network execute
  Algorithm~\ref{algo:rand4deltacolor} simultaneously and let~$v$ be
  an arbitrary node. Then the probability that~$v$ has a conflict with
  another node~$u$ after the execution is at most $\frac{1}{n^{c+3}}$.
\end{proposition}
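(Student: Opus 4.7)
The plan is to iterate the per-phase contraction given by \cref{lem:phase-conflict-prob} uniformly over all nodes. Define the global maximum conflict probability after phase~$t$ as
\[
P^t := \max_{v \in V} \ptc(v).
\]
At the start, every pair of neighbors has at most one chance out of $4\Delta$ of sharing a color, so in particular $P^0 \leq 1$ (even a trivial bound suffices, since we only need an exponentially decaying recurrence). Applying \cref{lem:phase-conflict-prob} to an arbitrary $v$ and then taking the maximum over $v$ on the left-hand side gives
\[
P^{t+1} \;\leq\; \tfrac{5}{6}\, \max_{v \in V} \max_{w \in N_v^+} \ptc(w) \;\leq\; \tfrac{5}{6}\, P^t.
\]

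Next, I would iterate this contraction across all $T := 6(c+3)\ln n$ phases executed by Algorithm~\ref{algo:rand4deltacolor} to obtain
\[
\P^T_\text{confl}(v) \;\leq\; P^T \;\leq\; \bigl(\tfrac{5}{6}\bigr)^{T} P^0 \;\leq\; \bigl(1 - \tfrac{1}{6}\bigr)^{6(c+3)\ln n}.
\]
Using \cref{fact:math-fact} with $t=-1$ and $n=6$, we have $(1 - 1/6)^{6} \leq e^{-1}$, and therefore
\[
\bigl(1 - \tfrac{1}{6}\bigr)^{6(c+3)\ln n} \;=\; \Bigl(\bigl(1 - \tfrac{1}{6}\bigr)^{6}\Bigr)^{(c+3)\ln n} \;\leq\; e^{-(c+3)\ln n} \;=\; \frac{1}{n^{c+3}}.
\]
Since a conflict of $v$ with some neighbor $u$ after the execution is exactly the event $\mathcal{E}^T_\text{confl}(v)$, this gives the claimed bound.

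The only genuinely non-routine step is the first one: arguing that the bound of \cref{lem:phase-conflict-prob}, which is stated for a single node $v$ in terms of its (random) conflict set, can be propagated into a clean deterministic recurrence on the uniform upper bound $P^t$. The key point is that the bound in the lemma depends on neighbors' conflict probabilities only through their maximum, and the lemma's probability bounds are stated in a form that holds regardless of events at other nodes (as emphasized in the paragraph preceding the definitions of $\pts, \ptf$). This lets us take the supremum over $v$ on both sides without any conditioning subtleties and produces the geometric decay $P^{t+1} \leq (5/6) P^t$ that drives the rest of the calculation.
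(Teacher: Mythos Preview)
Your proof is correct and follows essentially the same approach as the paper: define (implicitly or explicitly) the global maximum conflict probability $P^t=\max_{w\in V}\ptc(w)$, use \cref{lem:phase-conflict-prob} together with $N_v^+\subseteq V$ to obtain the contraction $P^{t+1}\leq \tfrac{5}{6}P^t$, iterate over the $6(c+3)\ln n$ phases starting from the trivial bound $P^0\leq 1$, and finish with \cref{fact:math-fact}. The paper carries this out as a single chain of inequalities rather than introducing the notation $P^t$, but the argument is the same.
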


\begin{proof}
  Let us consider the probability of a conflict at an arbitrary
  node~$v \in V$ in phase~$t = 6(c+3)\ln n$. It holds that
  \begin{align*}
    & \ptc(v) \leq
      \left(\frac{5}{6}\right) \max_{w \in N_v} \pc^{t-1}(w)  
      \leq \left(\frac{5}{6}\right) \max_{w \in V} \pc^{t-1}(w) \\ 
    &\leq
      \left(\frac{5}{6}\right)^t \max_{w \in V} \pc^{0}(w) 
    \leq \left(1 - \frac{1}{6}\right)^{6(c+3)\ln n} \leq
      \frac{1}{n^{c+3}},
  \end{align*}
  where the first inequality is due to
  Lemma~\ref{lem:phase-conflict-prob}. The third inequality holds
  since all nodes are in the same phase due to the synchronous start
  of the algorithm. Note that the upper bound on the probability that
  a conflict propagates holds for all nodes. The fourth inequality
  holds as $\pc^{0}(v)$ is obviously bounded from above by~$1$ for all
  nodes~$v$. The last inequality holds due to
  Fact~\ref{fact:math-fact}.
\end{proof}

It is guaranteed by Proposition~\ref{prop:randcolor-valid-whp} that
the nodes that finished Algorithm~\ref{algo:rand4deltacolor} have
valid colors from~$[4\Delta]$. It remains to show that
Algorithm~\ref{algo:colorreduction} reduces the colors from $4\Delta$
to $\Delta + 1$.

\subsection{Synchronous Color Reduction}
\label{sec:color-reduction}

We shall only sketch the proof in the following, as the general color
reduction scheme is already known for the $\mathcal{LOCAL}$ model
(cf. for example \cite{be-dcg-13}).

\begin{lemma}
  \label{lem:color-reduction-correct}
  Given a network such that each node has a valid
  color~$c_v \in [4\Delta]$ with probability at
  least~$1-\frac{1}{n^{c+3}}$. Then a synchronous execution of
  Algorithm~\ref{algo:colorreduction} computes a valid $\Delta+1$
  coloring at~$v$ with probability at least~$\frac{1}{n^{c+1}}$.
\end{lemma}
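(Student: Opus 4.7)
The plan is to combine a deterministic verification that Algorithm~\ref{algo:colorreduction} produces a valid $\Delta+1$-coloring at $v$ whenever the input $4\Delta$-coloring is valid and every transmission succeeds, with a union bound over the failure probabilities of these two conditions.

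For the deterministic part, I would first observe that in each phase $i \in [4\Delta]$ exactly the nodes currently colored $i$ transmit. Because the input coloring is valid, these transmitters form an independent set, so at most one of them lies in any broadcasting region. This is precisely the hypothesis that \cref{lem:sum-tx-prob} requires, enabling \cref{lem:sinr-lb-and-cr-tx} with $\ell=2$ to give successful delivery from any fixed transmitter to all its neighbors within $\kappa_2 \in \O(\log n)$ slots with probability at least $1-1/n^{c'}$ for a constant $c'$ that can be chosen freely via the constant hidden in $\kappa_2$.

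Next I would verify correctness on the event of reliable communication. When $v$ reaches its own phase $i$ (the phase whose index equals $v$'s input color), it has already listened through phases $0,\dots,i-1$ and removed from $F_v$ the final colors of every neighbor $w$ with input color smaller than $i$. Since $|N_v|\leq\Delta$ and $|[\Delta]|=\Delta+1$, the set $F_v$ is non-empty and $v$ draws a final color $c_v\in[\Delta]$. By construction $c_v$ differs from the final color of every neighbor with smaller input color; any neighbor with larger input color receives $v$'s broadcast in phase $i$, removes $c_v$ from its own palette, and thus avoids it when drawing later. Hence no neighbor of $v$ shares $v$'s final color, which is the assertion of the lemma.

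Finally I would finish with a union bound. \cref{prop:randcolor-valid-whp} gives that a fixed node has an invalid input color with probability at most $1/n^{c+3}$, so by a union bound over the $n$ nodes the input coloring is globally valid with probability at least $1-1/n^{c+2}$. There are at most $n$ transmissions in the whole execution (each node transmits in exactly one phase), each succeeding with probability $\geq 1-1/n^{c'}$; setting the constant in $\kappa_2$ so that $c'\geq c+3$ bounds the total communication-failure probability by $1/n^{c+2}$. Summing yields an overall failure probability of at most $1/n^{c+1}$, matching the claim. The main (mild) obstacle is recognising that the two failure modes are coupled -- global input validity is exactly what allows \cref{lem:sinr-lb-and-cr-tx} to apply in every phase -- so the two union bounds must be performed side by side; once this is in place, the rest of the argument is a bookkeeping step on top of a deterministic sweep through the $4\Delta$ colors.
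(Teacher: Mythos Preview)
Your approach is essentially the paper's: verify that \ColorReduction\ is deterministically correct whenever the input $4\Delta$-coloring is valid and every transmission succeeds, then union-bound the two failure modes. The only structural difference is that the paper does the union bound \emph{locally}---over $v$'s (at most $\Delta$) neighbors' input colors and the $\Delta+1$ relevant transmissions, obtaining $(2\Delta+1)/n^{c+3}$---whereas you union-bound globally over all $n$ nodes and all $n$ transmissions. Both reach $1/n^{c+1}$, so this is cosmetic.

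One genuine technical slip: the claim that an independent set of transmitters implies ``at most one of them lies in any broadcasting region'' is false. Two nodes $u,w$ with $\dist(u,w)>r_B$ are non-neighbors yet can both sit inside a third node's disk $B_v$ of radius $r_B$. What actually holds (and what \cref{lem:sum-tx-prob} uses in the appendix) is a packing bound: at most a constant number---five---of same-colored nodes can lie in any broadcasting region, since six points in a disk of radius $r_B$ would force two within distance $r_B$. This constant is what makes the sum of transmission probabilities bounded and hence \cref{lem:sinr-lb-and-cr-tx} applicable with $\ell=2$. Your conclusion that the lemma applies is correct; only the geometric justification needs to be replaced by this packing argument.
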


\begin{proof}
  Given a node~$v$ with valid color with
  probability~$1-\frac{1}{n^{c+3}}$. As the color reduction scheme is
  a deterministic algorithm (apart from the communication), only two
  possibilities for an invalid color at~$v$ after the execution exist.
  The first is that $v$'s color was not valid, and $v$ selected the
  same color as one of its neighbors, as both selected their final
  color in the same phase of Algorithm~\ref{algo:colorreduction}. By
  applying a union bound, this happens with probability at most
  $\frac{\Delta}{n^{c+3}}$. The second is that~$v$ did not receive the
  color of one of its neighbors, or one of~$v$'s neighbors did not
  receive~$v$'s color.  It follows from \cref{lem:sinr-lb-and-cr-tx}
  that communication in Algorithm~\ref{algo:colorreduction} is
  successful with probability~$1-\frac{1}{n^{c+3}}$.  Thus, applying
  another union bound, it follows that the overall probability for an
  invalid color at~$v$ is at
  most~$\frac{2\Delta+1}{n^{c+3}} \leq \frac{1}{n^{c+1}}$.
\end{proof}

We can now prove the correctness and runtime of the algorithm.

\begin{proof}[Proof of Theorem~\ref{thm:colors-valid-whp}]
  By Proposition~\ref{prop:randcolor-valid-whp}, after executing
  Algorithm~\ref{algo:rand4deltacolor} (\RandCDeltaColoring), a node
  has a conflict with probability at most~$\frac{1}{n^{c+3}}$. Using
  Lemma~\ref{lem:color-reduction-correct}, this $4\Delta$ coloring is
  refined to a valid~$\Delta+1$ coloring by
  Algorithm~\ref{algo:colorreduction} (\ColorReduction) with
  probability at least~$1 - \frac{1}{n^{c+1}}$ for each node. A
  union bound over all nodes implies the validity of the coloring with
  high probability.

  The total runtime of the algorithm comprises the runtimes of
  Algorithm~\ref{algo:rand4deltacolor} and
  Algorithm~\ref{algo:colorreduction}.
  Algorithm~\ref{algo:rand4deltacolor} consists of
  $6(c+3) \ln n = \O(\log n)$ phases, and each phase takes
  $\kappa_0 = \O(\Delta)$ time slots according to
  Corollary~\ref{cor:randcolor-const-tx-succ}. As local computations
  do not require a separate time slot, this results in a runtime
  of~$\O(\Delta \log n)$.  In Algorithm~\ref{algo:colorreduction}, for
  each color in~$[4\Delta]$, a node either listens for incoming
  messages or transmits its selected final color
  for~$\kappa_2 \in \O(\log n)$ time slots. This results in an
  additional $\O(\Delta \log n)$ time slots. Therefore, the total
  runtime of the algorithm is~$\O(\Delta \log n)$.
\end{proof}

%%% Local Variables: 
%%% mode: latex
%%% TeX-master: "coloring-ieee"
%%% End: 

\newcommand{\slan}{\ensuremath{k}} %second level active nodes
\newcommand{\misruntime}{\ensuremath{(\slan)\kappa_2}}%\ensuremath{\O(\log n)}}
\newcommand{\slai}{\ensuremath{2 \slan^2 \kappa_2}} %second level active
                                %interval (slots)
%\slai correct $2 (k-1) \misruntime = 2 (k-1) k \kappa_2$

\section{Asynchronous Color Reduction}
\label{sec:async-color-reduct}

Let us now consider the asynchronous setting, which allows nodes to
wake up at arbitrary times, and does not assume synchronized time
slots apart from in the analysis (cf. \cref{sec:prelim-coloring}).
The algorithm described in this section is the same classic color
reduction algorithm as Algorithm~\ref{algo:colorreduction}, however,
generalized to the asynchronous setting. We assume a valid node
coloring with $d$ colors to be given and reduce the number of colors
to $\Delta+1$ in $\O(d \log n)$ time slots. Let us for the rest of
this section assume that we are given an $\O(\Delta)$ coloring.

\begin{figure*}[b!]
  \centering
    \includegraphics[width=0.95\textwidth]{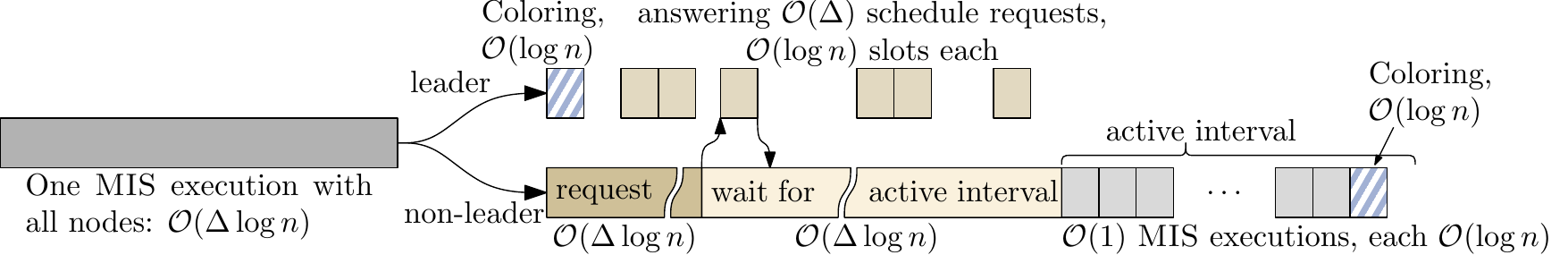}
    \caption{Runtime. Overall $\O(\Delta \log n)$, given a
      $\O(\Delta)$ coloring.}
    \label{fig:async-cr-runtime}
\end{figure*}

Recall that for the synchronous algorithm, a schedule is created based
on the coloring, each node is activated based on its color, and the
nodes can simply select a color from~$[\Delta]$ as within each
broadcasting region at most five nodes are active at each time.  In the
asynchronous setting, however, the nodes cannot simply decide on a
common schedule without synchronization. Solving the synchronization
problem is in $\Omega(D)$ since all nodes in the network must agree on,
or at least be aware of the common schedule. The algorithm we present
in this section circumvents this problem, essentially, by using two
levels of MIS executions. Our algorithm is illustrated in
\cref{fig:async-cr-runtime}, the corresponding pseudocode can be found
as
\cref{algo:async-color-reduction,algo:mis,algo:Colored-l,algo:level-2}.
We reference the {\sc MIS} (\cref{algo:mis}) executed with
parameter~$\ell=1$ by \emph{first level MIS}, and MIS$(\ell=2)$ by
\emph{second level MIS}.  Before introducing the notation used in the
pseudocode, we shall describe the algorithm in more detail.

The algorithm starts by executing the first level MIS$(\ell=1)$
algorithm that determines a set of independent nodes, which we call
leaders.  Each leader node transitions to \cref{algo:Colored-l},
selects and transmits the color 0 it selected and initializes its
periodic leader schedule. This schedule assigns each color an
\emph{active interval} of length $\O(\log n)$ time slots to allow the
nodes of this color to select their final color from $[\Delta]$.

Each node $v_i$ that is not in the first level MIS selects an
arbitrary leader and requests the relative time until it is $v_i$'s
turn to be \emph{active}.  Upon receipt of its active intervall, the
node waits until the interval starts before executing a second level
MIS algorithm (which does not interfere with first level MIS) for a
constant number of times.  In this second level MIS the algorithm
benefits from fewer active nodes, and hence more efficient
communication to allow each node to achieve successful transmission of
a message to all neighbors in $\O(\log n)$ time slots.  Moreover, we
can speed up the MIS algorithm by the same factor of $\Delta$ to
execute in $\O(\log n)$ time slots, as only a constant number of nodes
compete to be in each second level MIS. It holds for each node that
wins the second level MIS, that there is no other node of the second
level MIS in its broadcasting range.  Thus, the winning node can
select a valid color from $\{1,\dots,\Delta\}$ and transmit its choice
to its neighbors without a conflict. If a node does not succeed to be
in the second level MIS, it simply executes MIS(2) again. As each node
succeeds in such an MIS within its active interval, each node selects
one of the $\Delta+1$ colors.

\begin{algorithm}
\DontPrintSemicolon
\ShowLn $F_v \gets [\Delta] \backslash \{0\}$ \;
\ShowLn  \ForEach (\tcp*[f]{do continuously}){received color$_w$ from
  $w$} {
  $F_v \gets F_v \backslash \{$color$_w\}$
}
\ShowLn {\sc MIS(1)}
\caption{{\sc AsyncColorReduction} for node $v$}
\label{algo:async-color-reduction}
\end{algorithm} 

\begin{algorithm}
  \DontPrintSemicolon
  \ShowLn $P_v = \emptyset$,
  $\text{{\sc Next}} = \begin{cases}
    ${\sc Level2}$ &\mbox{if Level 1 MIS }(\ell=1) \\
    ${\sc MIS($2$)}$ & \mbox{otherwise} \end{cases} $
  \ShowLn \For  (\tcp*[f]{Listen first}) {$\kappa_\ell$ time slots} {
    \ShowLn \textbf{for each} $w \in P_v$ \textbf{do}
    $d_v(w) = d_v(w) + 1$\;
    \ShowLn \textbf{if} $M_A^\ell(w, c_w)$ received \textbf{then}
    $P_v = P_v \cup \{w\}$; $d_v(w) = c_w$\;
    \ShowLn \textbf{if} $M_C^\ell(w)$ received \textbf{then} {\sc Next}(w)\;
    }
    \ShowLn $c_v = \Xi(P_v)$\;
    \ShowLn\While (\tcp*[f]{then compete for MIS}){$true$}{
      \ShowLn $c_v = c_v + 1$\;
      \ShowLn \textbf{if} {$c_v > \kappa_\ell$} \textbf{then} {\sc
        Colored}($\ell$)  \tcp*[r]{success}
      \ShowLn \textbf{for each} $w \in P_v$ \textbf{do}
      $d_v(w) = d_v(w) + 1$\;
      \ShowLn \textbf{if} $M_C^\ell(w)$ received \textbf{then} {\sc
        Next}(w) \tcp*[r]{fail}
      \ShowLn transmit $M_A^\ell(v,c_v)$ with probability $p_\ell$\;
      \ShowLn\If (\tcp*[f]{received competing counter}){$M_A^\ell(w, c_w)$ received } {
        \ShowLn$P_v = P_v \cup \{w\}$; $d_v(w) = c_w$\;
        \ShowLn \textbf{if} $|c_v - c_w| \leq \kappa_\ell$ \textbf{then}
        $c_v = \Xi(P_v)$\;
      }
    }
  \caption{{\sc MIS}$(\ell)$ for node $v$, simplified from the MW-coloring algorithm\cite{dt-dncsi-10,sw-cuwm-09}}
  \label{algo:mis}
\end{algorithm}

\begin{algorithm}
  \DontPrintSemicolon
  \ShowLn \eIf (\tcp*[f]{Level 1 leader}){$\ell = 1$}{
    \ShowLn color$_v \gets 0$, $Q \gets \emptyset$, $c'_v = 0$\;
  \ShowLn announce $M_C^1(v,\text{color}_v)$ with prob. $p_2$ for $\kappa_2$ slots\;
    \ShowLn \While(\tcp*[f]{serve requests}) {protocol is executed} {
      \ShowLn $c'_v \gets c'_v + 1$ \;
      \ShowLn transmit $M_C^1(v,\text{color}_v)$ with probability $p_1$\;
      \ShowLn  \ForEach (\tcp*[f]{do continuously}){received
        request $M_R(w,v,\text{color}_w^\text{tmp})$ from neighbor $w$} {
        \ShowLn $Q$.push$((w,\text{color}_w^\text{tmp}))$
      }
      \ShowLn \If {$Q$ not empty}{
        \ShowLn $(w,\text{color}_w^\text{tmp}) \gets Q.$pop(), $t \gets \tau(\text{color}_w^\text{tmp}, c'_v)$\;
        \ShowLn \For (\tcp*[f]{inc./dec. $c_v'$, $t$}){$\O(\log n)$ slots}{
          \ShowLn transmit $M_C^1(v,w,t)$ with probability $p_2$\;
          }
      }
    }
  } (\tcp*[f]{Level 2 / Non-leader node}){ %else 
  \ShowLn color$_v \gets F_v$.random() \tcp*[r]{no collisions}
  \ShowLn announce $M_C^2(c,\text{color}_v)$ with prob. $p_2$ for $\kappa_2$ slots\;
    \ShowLn \While (\tcp*[f]{keep color valid}){protocol is executed} {
      \ShowLn transmit color$_v$ with probability $p_1$\;
    }
  }
  \caption{{\sc Colored($\ell$)} for node $v$}
  \label{algo:Colored-l}
\end{algorithm}

\begin{algorithm}
  \DontPrintSemicolon
  \ShowLn \While {true} {
    \ShowLn \eIf {$M_C^1(w,v,t)$ received} {
      \ShowLn \While (\tcp*[f]{wait for active interval}) {$t<0$} {
        \ShowLn $t \gets t+1$ \tcp*[r]{one time slot each}
      }
      \ShowLn \While (\tcp*[f]{active interval}){$t < \slai$} {
        \ShowLn \tcp*[l]{increase $t$ by one in each time slot during {\sc MIS}(2)}
        \ShowLn {\sc MIS}$(2)$\;
        }
      }(\tcp*[f]{transmit request}){ 
        \ShowLn transmit $M_R(v,w, \text{color}_v^\text{tmp})$ with probability $p_1$\;
      }
    }
  \caption{{\sc Level2}($w$) for node $v$ with leader $w$}
  \label{algo:level-2}
\end{algorithm}

\subsection{{\sc MIS}, and Notation for {\sc AsyncColorReduction}}
\label{sec:async-notation}

%First constants, then MIS
Let us now describe the notation used in the algorithm in more detail,
along with the MIS algorithm, which is simplified from
\cite{dt-dncsi-10,sw-cuwm-09}. We denote the set of available colors
by $F_v$. Note that throughout the algorithm, each node deletes the
final colors it received from $F_v$.
The MIS algorithm (\cref{algo:mis}) aims at allowing exactly one node
in each neighborhood to succeed to Algorithm {\sc Colored}, to select
a color and annouce its success in the MIS algorithm to its
competitors.  There are minor differences depending on the two
levels~$\ell=1$ and~$\ell=2$, however, the algorithm remains the same.
Thus, we describe the algorithm for general~$\ell$.

%MIS
\cref{algo:mis} is based on the interplay of counters within a node's
neighborhood.  Each node $v$ has a counter $c_v$. We denote the set of
neighbors competing in the MIS by $P_v$. For each node $w$ in this
set, the counter value is stored (and increased by one in each time
slot) as $d_v(w)$. The MIS algorithm begins with a \emph{listen} phase
of $\kappa_\ell$ time slots, which ensures that each node
participating in the MIS received the counter values of other active
neighbors. As nodes joining later execute the listen phase, they know
the status of their neighborhood before competing for being in the
MIS.  Before competing, the counter of $c_v$ of $v$ is set to
$\Xi(P_v)$. This sets $c_v$ to the maximum value such
that~$c_v \leq 0$
and~$c_v \not \in \{d_v(w) - \kappa_\ell,\dots,d_v(w) + \kappa_\ell\}$
for each~$w \in P_v$. Intuitively, $c_v$ is set to a non-positive
value that is not within an interval of $\kappa_\ell$ of a competing
node's counter value.  This ensures that if $v$ reaches the counter
threshold, there is sufficient time to inform the
competitors~$w \in P_v$ of~$v$'s success, or the other way around if
one of~$v$'s competitors succeeds.  Note that the counter values
$d_v(w)$ are kept up-to-date (unless they are reset) by
increasing them in each time slot.

%messages
Let us now consider the messages used throughout the algorithm. To
transmit the current counter value $c_v$ to the neighbors, a message
$M_A^l(v,c_v)$ is used. This message contains the level $\ell$, the
transmitting node $v$ and its counter value~$c_v$.  The message that
indicates that a node succeeded in the MIS of level~$l$ is the $M_C^l$
message. This message has two uses in the algorithm. If it is used to
transmit the success in the MIS, it contains the node and the color it
selected. If it is used (by a first level leader $v$) to answer
requests for the activity interval, it contains $v$, the requesting
node $w$ and the time remaining until $w$'s active interval begins
(see below for a definition of $\tau(\cdot,\cdot)$).  Another message
containing only the transmitting node $v$ and its final color
color$_v$ is used in \cref{algo:Colored-l}.  The request message $M_R$
is transmitted by a node $v$ in \cref{algo:level-2}, if $v$ failed to
win the first level MIS algorithm. It contains the node $v$, a leader
$w$ in $v$'s neighborhood and $v$'s color from the initial
$\O(\Delta)$ coloring.

In \cref{algo:Colored-l}, $v$ is a leader, and color$_v$ denotes the final color from
$[\Delta]$, and $Q$ is a queue used to store nodes $w$ along with
their initial color color$_w^\text{tmp}$ that request an active
interval.  The remaining time is based on $v$'s periodic schedule,
which is defined by its counter value $c'_v$, and $w$'s color. We set
$k=90$ and
$\tau(\text{color}_w, c_v)$ such that $-\tau(\text{color}_w, c_v)$ is
positive and minimal with
$-\tau(\text{color}_w, c_v) \equiv \text{color}_w \cdot \slai - c_v
\mod \Delta \slai$
and $\tau(\text{color}_w, c_v) \geq \kappa_2$.  Intuitively, this sets
$t$ to the start of the next interval corresponding to $w$'s color in
$v$'s schedule, so that the starting time can communicated
w.h.p. before the interval starts. Note that $t$ is decreased
appropriately during the transmission interval.

\subsubsection*{Adapting the MIS Algorithm}
\label{sec:adapt-mis-algo}

We assume in the analysis that the MIS algorithm indeed computes a
maximal independent set. This follows directly from the coloring
algorithm in \cite{dt-dncsi-10,sw-cuwm-09}, from which \cref{algo:mis}
was simplified. For reference, we state it as the following lemma.

\begin{lemma}
  \label{lem:mis-independence-runtime}
  \cref{algo:mis} computes a MIS among participating nodes in
  $\delta_\ell \kappa_2$ time slots, where
  $\delta_\ell = \begin{cases} \Delta & \text{if } \ell = 1 \\ k &
    \text{if } \ell = 2 \end{cases}$ w.h.p.
\end{lemma}

Apart from constant changes, the lemma follows directly from
Theorems~1 and~2 in \cite{dt-dncsi-10} if $\ell = 1$. For $\ell=2$, it
follows from \cref{lem:sinr-lb-and-cr-tx} along with setting~$\Delta$
to a constant in the proofs of both theorems. Note that $\delta_\ell \kappa_2 = \O(\delta_\ell \log n)$.

\subsection{Analysis}
\label{sec:analysis-async-color-reduction}

We have seen that union bounding a w.h.p. result decreases the
exponent of the bound (cf. the proof of
\cref{lem:color-reduction-correct}). A higher exponent ultimately
results in increasing the runtime by only a constant factor. Thus, we
refrain from stating exact exponents in our w.h.p. bounds in the
following analysis to simplify notation.  Let us now state the main
result of this section, which we prove at the end of this section.  We
shall prove the lemmata required for its proof in the following.
\begin{theorem}
  \label{thm:async-correct-runtime}
  Given a valid node coloring with $d\geq\Delta$ colors. Then
  Algorithm~\ref{algo:async-color-reduction} computes a valid
  $\Delta+1$ coloring in $\O(d \log n)$.
\end{theorem}
As the algorithm is essentially a simple color reduction scheme, each
node selects a valid color if the communication can be realized as
claimed. To prove this we show that in the second level indeed only a
constant number of nodes are active in each broadcasting range. This
implies that message transmission from active nodes to all their
neighbors can be achieved in $\O(\log n)$ time slots, and similarly
that the second level MIS can also be executed in $\O(\log n)$ time
slots. Finally, we prove that each non-leader node $v$ succeeds in a
second level MIS, and thus colors itself with a color from
$\{1,\dots,\Delta\}$ within the active interval it is assigned by its
leader.

\begin{lemma}
  \label{lem:async-schedule-const-neighbors}
  In the second level, at most $\slan=90$ nodes are active in each
  broadcasting range.
\end{lemma}

\begin{proof}
  The lemma follows from a geometric argument. Let us consider an
  arbitrary node~$v$. Nodes in the broadcasting range of~$v$ must
  select their leader from the set of (first level) MIS nodes within
  the radius of two broadcasting ranges from~$v$. Geometrically at
  most 18 independent nodes can be in such a disk, thus each node from
  the broadcasting range of~$v$ selects one of these 18 nodes as
  leader. Each leader may be selected as leader by at most 5 nodes of
  the same color. Thus, overall the nodes in~$v$'s broadcasting range
  may follow up to~18 schedules, and for each schedule at most 5 nodes
  are active at the same time, which implies the upper bound of~90
  active nodes in the broadcasting range of~$v$. The proof is
  illustrated in \cref{fig:async-color-mis}.
\end{proof}

\begin{figure}[tb]
  \centering
  \includegraphics[width=1.00\linewidth]{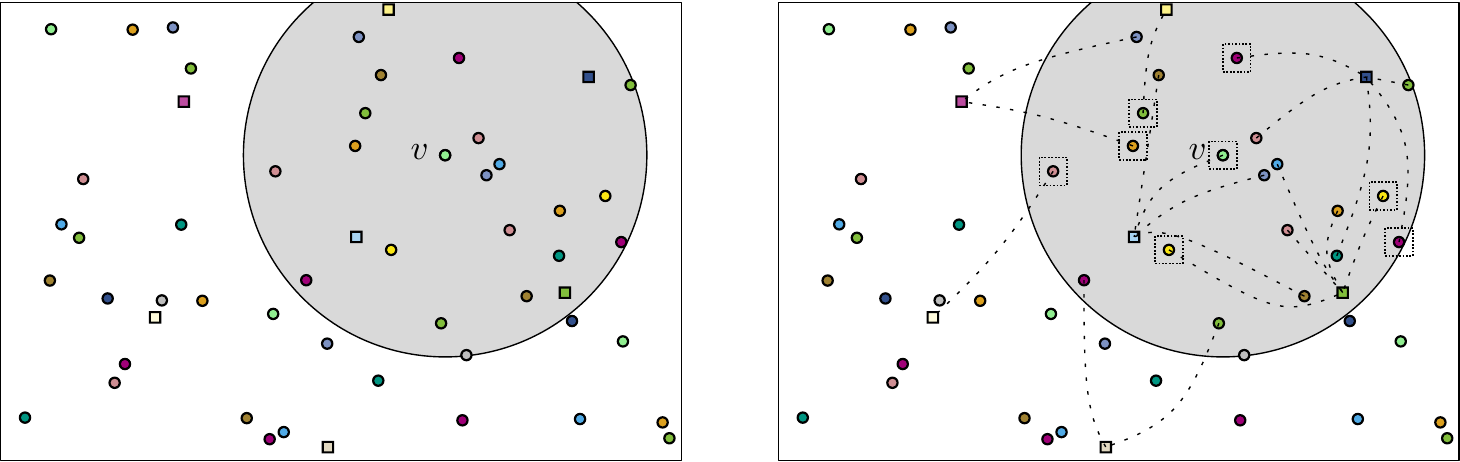}
  \caption{Left: Node $v$ with its broadcasting region in a network
    with valid coloring; Nodes in the first level MIS are
    squares. Right: Nodes in $v$'s broadcasting range are connected to
    their selected leader by a dashed line. Nodes currently active in
    the second level are surrounded by a square. 
  }
  \label{fig:async-color-mis}
\end{figure}

Compared to the classical local broadcasting algorithm, we increase
the transmission probability by a factor of $\Delta$, thus decreasing
the time to successful message transmission by the same factor of
$\Delta$ to $\O(\log n)$. This is possible as instead of~$\Delta$ only
a constant number of nodes try to transmit. Based on this result we
can bound the runtime of our algorithm, starting with
\cref{algo:level-2}.

\begin{lemma}
  \label{lem:async-schedule-level2-wait-time}
  Let $v$ execute \cref{algo:level-2} with leader $w$. Then 
  \begin{inparaenum}[a)]
  \item \label{enum-req-tx} $v$ transmits the request message successfully within $\kappa_1$
    time slots w.h.p.;
  \item \label{enum-receive-ai}  $v$ receives its active interval after at most another
    $\kappa_1$ time slots w.h.p.; and
  \item \label{enum-ai-waittime}$t \leq \Delta \slai \in \O(\Delta \log n)$.
  \end{inparaenum}
\end{lemma}

\begin{proof}
  Part~\ref{enum-req-tx}) is implied directly by
  \cref{lem:sinr-lb-and-cr-tx}. Regarding Part~\ref{enum-receive-ai}),
  it holds that $v$'s leader node $w$ must answer at most~$\Delta$
  requests. For each request $w$ transmits the node's answer for
  $\kappa_2$ time slots. Thus $v$ receives its active interval at
  most~$\kappa_1 + \Delta\kappa_2$ time slots after it started
  transmitting its request.  Part~\ref{enum-ai-waittime}) holds by
  definition of $\tau(\cdot,\cdot)$.
\end{proof}

We shall now argue that each non-leader node succeeds to win a second
level MIS in its active interval.

\begin{lemma}
  \label{lem:async-schedule-node-in-mis}
  Given a node $v$ executing \cref{algo:level-2}. Once $t=0$, $v$
  wins a second level MIS set within $\slai$ time slots.
\end{lemma}

\begin{proof}
  The active interval of $v$ is from $t_s = 0$ to $t_e = \slai$.
  According to Lemma~\ref{lem:async-schedule-const-neighbors}, at most
  $\slan$ nodes are active in $v$'s broadcasting range at any time
  (including $v$). As nodes are active for $\slai$ consecutive time
  slots, it holds that during the interval $\{t_s,\dots,t_e\}$ at most
  $2\slan-2$ neighbors of $v$ are active as well. With each MIS
  execution at the second level at least one neighbor gets to be in an
  independent set, selects a valid color, transmits this color to its
  neighbors and does not participate in following MIS
  executions. Thus, after at most $2\slan-2$ MIS executions $v$ either
  succeeded in a previous MIS execution, or all active neighbors of
  $v$ succeeded and thus $v$ must succeed in the next second level MIS
  execution.
\end{proof}

As a final step we show that the final color the nodes select are
valid w.h.p.
\begin{lemma}
  \label{lem:async-colors-valid}
  Given a node $v$ entering \cref{algo:Colored-l}. It holds that   
  \begin{inparaenum}[a)]
  \item \label{enum-tx-no-confl} while $v$ transmits its final color
    no neighbor of $v$ succeeds in a second level MIS w.h.p.; and
  \item \label{enum-color-valid} the color $v$ selects is not selected
    by one of $v$'s neighbors w.h.p.
  \end{inparaenum}
\end{lemma}

\begin{proof}
  Part~\ref{enum-tx-no-confl}) holds since all neighbors of $v$ that
  participated in the current second level MIS know the counter value
  of $v$ w.h.p., and hence do not enter the MIS before $v$ finished
  transmitting its final color. As the nodes are not synchronized, a
  node $w$ may have just entered the active interval. In this case,
  however, the listening phase of $\kappa_2$ time slots prevents~$w$
  from succeeding before it knows the status of all active neighbors.
  Thus, while $v$ transmits its color, no neighbors of $v$ select their
  final color (except for first level leaders taking color 0, which
  does not conflict with the second level coloring).  For
  Part~\ref{enum-color-valid}) observe that the algorithm removes each
  final color it receives from the set of available colors according
  to \cref{algo:async-color-reduction}. Due to
  Part~\ref{enum-tx-no-confl}) it holds that the colors of all
  neighbors of $v$ that selected a final color before $v$ were able to
  transfer the color to~$v$ successfully. Thus $v$ selected a color
  which was not selected by one of its neighbors.
\end{proof}

We are now able to prove the main theorem. Note that runtime bounds
hold for each node beginning with the node start executing the
algorithm.
\begin{proof}[Proof of \cref{thm:async-correct-runtime}]
  The number of colors used follows directly from the algorithm. It
  follows from \cref{lem:async-colors-valid} and the fact that each
  node succeeds in an MIS (and hence enters \cref{algo:Colored-l} and
  selects a final color), that the final color of each node is valid
  w.h.p. A simple union bound over all nodes implies that the
  $\Delta+1$ coloring is valid.  The runtime follows as the first
  level MIS takes $\O(\Delta\log n)$ time slots according to
  \cref{lem:mis-independence-runtime}, \cref{algo:level-2} takes
  another $\O(\Delta \log n)$ slots until starting the active
  interval, which is of length $\O(\log n)$.
\end{proof}

\begin{corollary}
  Let each node in the network execute the MW-coloring algorithm from
  \cite{dt-dncsi-10} followed by
  Algorithm~\ref{algo:async-color-reduction}. Then $\O(\Delta \log n)$
  time slots after a node started executing the algorithm it has a
  valid color from $[\Delta]$.
\end{corollary}

\vspace{2mm}
{
\noindent\textbf{Concurrent Execution of Algorithms: }
\label{sec:conc-exec-algor}
We consider this algorithm isolated from other algorithms that may be
executed simultaneously by other nodes (which just woke up, for
example). The additional interference introduced by a constant number
of different algorithms that are executed simultaneously in the
network can be handled simply by reducing the transmission
probabilities used in the algorithms by the number of algorithms
executed simultaneously.

%%% Local Variables: 
%%% mode: latex
%%% TeX-master: "coloring-ieee"
%%% End: 

\section{Conclussion}
\label{sec:conclussion}

We conclude that the proposed distributed $\Delta+1$ coloring
algorithms are simple and very fast. They are based on a simple and
well-known randomized $\Delta+1$ coloring algorithm and a color
reduction scheme.  By carefully balancing the uncertainties due to
communication in the physical or SINR model and the uncertainties in
the randomized algorithm itself, we are able to guarantee a runtime of
$\O(\Delta \log n)$ time slots for all our algorithms. This
corresponds to just a few rounds of local broadcasting and outperforms
all previous algorithms either by the number of colors required or the
running time.

\vspace{2mm}
{
\noindent\textbf{Acknowledgements: }
\label{sec:acknowledgements}
We thank Magn\'{u}s M.\ Halld\'{o}rsson for helpful discussions on an
early stage of this work, and the German Research Foundation (DFG),
which supported this work within the Research Training Group GRK 1194
"Self-organizing Sensor-Actuator Networks".  }

%%% Local Variables: 
%%% mode: latex
%%% TeX-master: "coloring-ieee"
%%% End: 

\bibliographystyle{IEEEtran}
\bibliography{abbrv,bib}

\begin{appendix}

\section{Appendix}

\subsection{Omitted Proofs}
\label{sec:omitted-proofs}

Restatement of Lemma~\ref{lem:sum-tx-prob}, which establishes a bound
on the sum of transmission probabilities from within each broadcasting
region based on the transmission probabilities $p_1$ and $p_2$ used
for communication in the algorithms.
\lemmasumtxbounded*

\begin{proof}
  Depending on the algorithm, nodes in $v$'s broadcasting range are
  either transmitting with probability $p_1$ or $p_2$.  It holds that
  most nodes transmit with probability $p_1$ for all algorithms. For
  the use of probability $p_2$ in Algorithm~\ref{algo:colorreduction}
  is holds that at most a 5 nodes from within each broadcasting range
  transmit with probability $p_1$, since in a broadcasting range in
  which more than 5 nodes have the same color, two of them must be
  neighbors. This would violate the validity of the $4\Delta$ coloring
  of the network, which is guaranteed by
  Proposition~\ref{prop:randcolor-valid-whp}.  For
  \cref{algo:async-color-reduction} it holds that at most 90 nodes
  within $v$'s broadcasting range use $p_1$ according to
  \cref{lem:async-schedule-const-neighbors}.

  Let us now consider both cases jointly, and let $p_w$ be the current
  transmission probability of $w$, which is either $p_1$, $p_2$ or 0
  if $w$ is currently not trying to transmit (e.g. if $c^t_v \not = i$
  in Algorithm~\ref{algo:colorreduction}).  Let us now bound the sum
  of transmission probabilities from within $B_v$.
  \begin{align*}
    \sum_{w \in B_v,\atop w \text{ transmits}} p_w &\leq \sum_{w \in B_v}
    p_1 + \sum_{w \in B_v,\atop w \text{ transmits with }p_2} p_2 \\ 
    & \leq \Delta \cdot
    \frac{1}{2\Delta^A} + 90 \cdot \frac{1}{180} \leq 1,
  \end{align*}
  where the last inequality holds since $r_A > 2r_B$ which
  implies~$\Delta \leq \Delta^A$.
\end{proof}

\subsection{Successful Transmission with Constant Probability}
\label{sec:succ-transm-with}

In this section, we give a full proof for Theorem~\ref{thm:sinr}. The
following proof is largely based on the proof of Lemma 4.1 and Theorem
4.2 by Goussevskaia, Moscibroda and Wattenhofer
\cite{gmw-lbpim-08}. Before proving the result, let us introduce some
additional notation required in this section.

\subsubsection{Further definitions}
\label{app:sinr:further-definitions}

Definitions of the transmission and the broadcasting range are given
in Section~\ref{sec:prelim-coloring}. We shall give more details about
the proximity range here. Let us consider an arbitrary node $v$. In
accordance with \cite{gmw-lbpim-08}, we define the proximity range
around $v$ to be $r_A = r_B \left(3^3 2^\alpha \beta \cdot
  \left(\frac{\alpha-1}{\alpha-2}\right) \right) ^
{\frac{1}{\alpha-2}}$, and denote set of nodes within the transmission
range of $v$ by $A_v$.
  Let $\chi := \frac{2 \pi}{3 \sqrt{3}}
  \frac{(r_A+2r_B)^2}{r_B^2}$. This is, intuitively speaking, bound on
  the number of independent broadcasting ranges that can be fit in a
  disk of radius $r_A$.

\subsubsection{Full proof of Theorem~\ref{thm:sinr}}
\label{app:sinr:proof-of-theorem}

We shall now prove Theorem~\ref{thm:sinr} (restated in the following).
\sinrthm*

\begin{proof}
  Let us first prove the two major observations that are implied by
  the fact that the sum of transmission probabilities from within each
  broadcasting range is bounded by 1.  These observations are
  formulated in the two following claims. The probabilities proven in
  these claims can be combined with the transmission probability to
  form the probability that $v$ succeeds in transmitting its message
  in a given time slot. After proving the claims, we shall show that
  these probabilities indeed imply a constant success probability of
  $11/12$. 

  \begin{claim}
    The probability $P^{A_v}_\text{none}$ that $v$ is the only node in
    the proximity region that transmits a signal in the current time
    slot is at least $(1/4)^\chi$, and thus constant.
  \end{claim}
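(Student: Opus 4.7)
The plan is to combine Lemma~\ref{lem:sum-tx-prob} with a geometric covering of the proximity region and the standard numeric inequality $1-p \geq 4^{-p}$ valid for $p \in [0,1/2]$. Note that both transmission probabilities used in the algorithms, namely $p_1 = 1/(2\Delta^A)$ and $p_2 = 1/180$, lie well below $1/2$, so every individual $p_w$ satisfies the hypothesis of this inequality.

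First I would set up a partition. Using a standard hexagonal tessellation of the plane into cells of diameter at most $r_B$, every cell is contained in some broadcasting region $B_u$. A straightforward area comparison (the disk of radius $r_A+r_B$ containing all cells that intersect $A_v$ vs.\ the area $\frac{3\sqrt{3}}{2}\bigl(r_B/2\bigr)^2$ of a hexagon inscribed in a disk of radius $r_B/2$, scaled to radius $r_B$) gives at most $\chi = \frac{2\pi}{3\sqrt{3}} \cdot (r_A+2r_B)^2/r_B^2$ such cells, which is exactly the quantity appearing in the statement. Each node of $A_v$ lies in a unique cell, and hence in at least one broadcasting region from a collection of at most $\chi$ regions $B_1,\dots,B_\chi$ covering $A_v$.

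Next I would sum the transmission probabilities. By Lemma~\ref{lem:sum-tx-prob}, for every broadcasting region $B_i$ we have $\sum_{w \in B_i} p_w \leq 1$. Assigning each node in $A_v$ to exactly one of the covering regions yields
\begin{align*}
  \sum_{w \in A_v \setminus \{v\}} p_w \;\leq\; \sum_{i=1}^{\chi} \sum_{w \in B_i} p_w \;\leq\; \chi.
\end{align*}
Then, since $p_w \leq 1/2$ for every $w$, the elementary inequality $1-p \geq 4^{-p}$ applies termwise, and by independence of the per-slot transmission decisions
\begin{align*}
  P^{A_v}_\text{none} \;=\; \prod_{w \in A_v \setminus \{v\}} (1-p_w) \;\geq\; \prod_{w \in A_v \setminus \{v\}} 4^{-p_w} \;=\; 4^{-\sum_{w \in A_v \setminus \{v\}} p_w} \;\geq\; 4^{-\chi} \;=\; (1/4)^\chi,
\end{align*}
which is a positive constant because $\chi$ depends only on the fixed ratio $r_A/r_B$ and hence only on the SINR hardware constants $\alpha,\beta$.

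The main obstacle is the geometric step: one has to justify that exactly $\chi$ cells suffice and that each cell fits inside some broadcasting region. I would handle this by the hexagonal packing argument used in~\cite{gmw-lbpim-08}, which is the source of the particular constant $\chi$; the rest is a clean application of the sum-bound and the $1-p \geq 4^{-p}$ trick. The only thing to verify separately is that the maximum individual $p_w$ actually stays at or below $1/2$ (so the exponential inequality applies), which is immediate from the definitions of $p_1$ and $p_2$.
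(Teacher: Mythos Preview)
Your proof is correct and follows essentially the same route as the paper: convert the product $\prod_{w}(1-p_w)$ to $(1/4)^{\sum_w p_w}$ via the inequality $1-p \geq 4^{-p}$ (which the paper cites as Fact~3.1 of~\cite{gmw-lbpim-08}), cover $A_v$ by at most $\chi$ broadcasting regions, and then invoke Lemma~\ref{lem:sum-tx-prob} on each region. The only cosmetic difference is that you phrase the covering via a hexagonal tessellation into diameter-$r_B$ cells, whereas the paper covers $A_v$ by the broadcasting regions of a maximal independent subset of $A_v$; both constructions come from~\cite{gmw-lbpim-08} and yield the same constant~$\chi$.
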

  \begin{proof}[Proof of the claim]
    Let us consider the probability that any other node in $v$'s
    proximity region attempts to transmit. Again, let $p_w$ be the current
    transmission probability of $w$, which is either $p_a$, $p_s$ or 0
    if $w$ is currently not trying to transmit. 
    \begin{align*}
      P^{A_v}_\text{none} & \geq \prod_{w \in A_v\backslash\{v\}} (1-
      p_w) \geq \left( \frac{1}{4} \right)^{\sum_{w \in
          A_v\backslash\{v\}} p_w} \\
      & \geq \left( \frac{1}{4} \right)^{\sum_{u\in
          A_v\backslash\{v\},\atop u \text{ independent}}\sum_{w \in
          B_u}
        p_w} \\
      & \geq \left( \frac{1}{4} \right)^{\sum_{u\in
          A_v\backslash\{v\},\atop u \text{ independent}} 1} \geq
      \left( \frac{1}{4} \right)^\chi,
    \end{align*} 
    where the second inequality holds due to Fact~3.1 in
    \cite{gmw-lbpim-08}, the third inequality follows by covering the
    nodes in $A_v$ by broadcasting ranges of independent nodes in
    $A_v$. The next inequality is implied by
    Lemma~\ref{lem:sum-tx-prob}, while the last inequality holds as
    $\chi$ is (roughly speaking) the number of independent nodes in
    $A_v$.  The claim follows from observing that $\chi$ is indeed
    constant as the number of disks required to cover a (by a
    constant) larger disk is constant (cf. Fact 3.3 in
    \cite{gmw-lbpim-08}).
  \end{proof}
  As the proof of the following claim is in parts exactly as already
  covered in \cite{gmw-lbpim-08}, we omit the bound on the
  interference received from nodes outside of the proximity area of
  $v$ on nodes in the broadcasting range of $v$.
  \begin{claim}
    The probability $P^v_\text{SINR}$ that the SINR constraint
    holds for a given transmission is at least $1/2$.
  \end{claim}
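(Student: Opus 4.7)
The plan is to analyze the total interference $I_u$ at a fixed receiver $u \in N_v$ when $v$ transmits. Conditioning on the event (established in the previous claim) that $v$ is the unique transmitter inside the proximity region $A_v$, the interference at $u$ arises only from transmitters outside $A_v$. I would first bound $\mathbb{E}[I_u]$ and then apply Markov's inequality to deduce that the SINR constraint is satisfied with probability at least $1/2$.

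To bound the expected interference, I would partition the plane outside $A_v$ into concentric annuli $R_1, R_2, \ldots$ of width $r_B$, where $R_i$ consists of points at distance between $r_A + (i-1) r_B$ and $r_A + i\, r_B$ from $u$. Each $R_i$ can be covered by $\O(i)$ disks of radius $r_B$, and by Lemma~\ref{lem:sum-tx-prob} the sum of transmission probabilities inside each such disk is at most $1$. Combined with the signal-decay factor on the order of $P / (r_A + (i-1) r_B)^\alpha$ seen by $u$ from transmitters in $R_i$, the expected contribution of $R_i$ to $I_u$ is $\O\!\left( i \cdot P / (r_A + i\, r_B)^\alpha \right)$. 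Summing over $i \geq 1$ yields a convergent series (using $\alpha > 2$), giving a bound of the form $\mathbb{E}[I_u] \leq C \cdot P / (r_A^{\alpha - 2} \cdot r_B^{2})$ for an absolute constant $C$ depending only on $\alpha$. The specific choice $r_A = r_B \bigl( 3^3 \cdot 2^\alpha \beta \cdot (\alpha - 1)/(\alpha - 2) \bigr)^{1/(\alpha - 2)}$ from \cite{gmw-lbpim-08} is calibrated precisely so that $\mathbb{E}[I_u] \leq \tfrac{1}{2}\bigl( P / (\beta\, r_B^\alpha) - \noise \bigr)$.

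With this bound in hand, Markov's inequality yields
\[
\Pr\!\left[ I_u > \tfrac{P}{\beta\, r_B^\alpha} - \noise \right] \;\leq\; \Pr\!\left[ I_u > 2\,\mathbb{E}[I_u] \right] \;\leq\; \tfrac{1}{2}.
\]
Since the SINR constraint at $u$ is equivalent to $I_u \leq P / (\beta\, r_B^\alpha) - \noise$, it holds at $u$ with probability at least $1/2$. Because the bound on expected interference is uniform over receivers within distance $r_B$ of $v$, the same conclusion transfers to every intended receiver in $N_v$, so $P^v_\text{SINR} \geq 1/2$ as claimed.

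The main obstacle is the geometric bookkeeping in summing the annular contributions and verifying that the constant $r_A$ beats Markov's threshold by the required factor of two. This calculation is standard but delicate; as the excerpt notes, the full details are exactly as in \cite{gmw-lbpim-08}, so in a complete write-up I would only need to check that our hypothesis (sum of transmission probabilities within each broadcasting region bounded by $1$, guaranteed by Lemma~\ref{lem:sum-tx-prob}) matches the preconditions of their analysis and then borrow the remainder verbatim.
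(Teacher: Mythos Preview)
Your proposal is correct and follows essentially the same approach as the paper: both bound the expected interference at a receiver in $B_v$ via a ring/annulus decomposition outside the proximity region, use the per-broadcasting-region bound on transmission probabilities (Lemma~\ref{lem:sum-tx-prob}) to control each ring's contribution, sum the geometric series using $\alpha>2$, and finish with Markov's inequality. The only cosmetic differences are that the paper centers its rings at the transmitter $v$ rather than the receiver $u$ and states the expectation bound as $\tfrac{P}{4\beta r_B^{\alpha}}$ (absorbing the noise term via $r_B < r_T$) rather than your $\tfrac{1}{2}\bigl(P/(\beta r_B^{\alpha}) - \noise\bigr)$; both reductions to \cite{gmw-lbpim-08} are the same.
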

  \begin{proof}[Proof of the claim] The proof is based on the concept
    of rings around the transmitting node $v$. With increasing
    distance the number of nodes in a ring increases, however, also
    the effects on nodes in the broadcasting range of $v$ decreases.
    Based on Lemma~\ref{lem:sum-tx-prob}, the bound on the
    interference received at an arbitrary node $w$ in $v$'s
    broadcasting range can be bound by $\frac{P}{4\beta r_B^{\alpha}}$
    exactly as in \cite{gmw-lbpim-08}. Now, applying the Markov
    inequality it holds that the probability that the interference is
    more than twice this level (and thus the SINR constraint is
    violated) is at most $1/2$. It follows that the SINR constraint
    holds with probability at least $1/2$.
  \end{proof}

  Combining the probabilities from the previous claims with the
  transmission probability $p_v$, it follows that the probability that $v$
  successfully transmits a message to all neighbors in a given time
  slot is at least $p_v \cdot P^{A_v}_\text{none} \cdot
  P^v_\text{SINR}$. As $\lambda := \left( P^{A_v}_\text{none} \cdot
    P^v_\text{SINR} \right)^{-1}$, the probability for a successful
  transmission to all neighbors after $\frac{\lambda \ln 12}{p}$
  time slots is at least
\begin{align*}
  1- \left( 1 - \frac{p}{\lambda} \right)^{\frac{\lambda \ln 12}{p} }
  \geq 1- e^{-\ln 12} \geq 1- \frac{1}{12} \geq \frac{11}{12},
\end{align*}
which concludes the proof.
\end{proof}

\subsection{Discussion and Future Work}
 \label{sec:discussion}

   {\bf Local Synchronization:} Let us briefly discuss the assumptions
   of our algorithms for the synchronous setting. A wireless network
   can be synchronized either by using network-wide broadcasts, or by
   applying synchronization methods such as Timing-sync Protocol for
   Sensor Networks (TPSN) \cite{gks-tpsn-03}. Assuming a node with a
   valid clock within few hops of each node, the network can be
   synchronized within a few rounds of local broadcasting. Then, the
   simple synchronous coloring algorithms presented in
   \cref{sec:algorithm} can be used to compute a $\Delta+1$
   coloring.

{\bf Without knowledge of $\Delta$:}
\label{sec:without-knowledge-delta}
If this upper bound on the local density is not known, one must begin
with a very low transmission probability and use a so-called
slow-start technique to find the correct transmission
probability. Current slow-start techniques for local broadcasting
require $\O(\Delta \log n + \log^2 n)$ time slots
\cite{hm-ttblb-12,yhwl-aodaa-12}. Unfortunately, we cannot adapt the
slow-start technique to achieve successful message transmission with
constant probability as shown for the case with knowledge of $\Delta$
in Section~\ref{sec:sinr-model-related}.

For the algorithm by Halld\'{o}rsson and Mitra \cite{hm-ttblb-12},
reducing the runtime by a $\log n$ factor by allowing success with
only constant probability (instead of w.h.p.) fails as we cannot
guarantee that the so-called fallback occurs with high probability,
which is required to keep the sum of local transmission probabilities
low. It would be interesting to find an algorithm that achieves
message transmission with constant probability without the knowledge
of $\Delta$ that improves the runtime bound of
$\O(\Delta \log n + \log^2 n)$.  Such an algorithm could generalize
the algorithms presented in this work to operate without prior
knowledge of $\Delta$.

\end{appendix}

%%% Local Variables: 
%%% mode: latex
%%% TeX-master: "coloring-ieee"
%%% End: 

\end{document}